\newtheoremstyle{mythm}{3pt}{3pt}{}{16pt}{\bfseries}{:}{.5em}{}
\theoremstyle{mythm}
\newtheorem{theorem}{Theorem}
\newtheorem{example}{Example}
\newtheorem{definition}{Definition}
\newtheorem{remark}{Remark}
\newtheorem{proposition}{Proposition}
\newtheorem{lemma}{Lemma}
\newtheorem{construction}{Construction}
\begin{document}
\title{Multi-access Coded Caching with Optimal Rate and Linear Subpacketization under PDA and Consecutive Cyclic Placement
\author{Jinyu~Wang, Minquan~Cheng, and Youlong~Wu}
\thanks{J. Wang and M. Cheng are with Guangxi Key Lab of Multi-source Information Mining $\&$ Security, Guangxi Normal University,
Guilin 541004, China  (e-mail: mathwjy@163.com, chengqinshi@hotmail.com). J. Wang is also with College of Mathematics and Statistics, Guangxi Normal University, Guilin 541004, China.}
\thanks{Youlong Wu is with the School of Information Science and Technology, ShanghaiTech University, 201210 Shanghai, China.
(e-mail: wuyl1@shanghaitech.edu.cn).}
}

\date{}
\maketitle

\begin{abstract}
This work considers the multi-access caching system proposed by Hachem et al., where each user has access to $L$ neighboring caches in a cyclic wrap-around fashion. We first propose a placement strategy called the consecutive cyclic placement, which achieves the maximal local caching gain. Then under the consecutive cyclic placement, we derive the optimal coded caching gain from the perspective of Placement Delivery Array (PDA), thus obtaining a lower bound on the rate of PDA. Finally, under the consecutive cyclic placement, we construct a class of PDA, leading to a multi-access coded caching scheme with linear subpacketization, which achieves our derived lower bound for some parameters; while for other parameters, the achieved coded caching gain is only $1$ less than the optimal one.  Analytical and numerical comparisons of the proposed scheme with existing schemes are provided to validate the performance.

\end{abstract}

\begin{IEEEkeywords}
Coded caching, multi-access, placement delivery array.
\end{IEEEkeywords}
\section{Introduction}
\label{intro}
Wireless network has been imposed a tremendous pressure on the network traffic during peak hours, due in large part to multi-media applications such as Video-on-Demand.
Furthermore, the high temporal variability of network traffic results in congestion during peak hours and underutilization of the network during off-peak hours.
Caching has been proposed as an effective method to shift traffic from peak to off-peak hours by placing popular contents into caches across the network during off-peak hours \cite{BBD}. Then the users are partially satisfied from local caches, so the network traffic can be reduced.

The first coded caching scheme was proposed by Maddah-Ali and Niesen (MN) in \cite{MN} for a $(K,M,N)$ centralized caching system, where a central server with $N$ files of unit size is connected by $K$ users, each one has a distinct cache of size $M$ units.
A coded caching scheme operates in two phases. In the placement phase, each file is divided into $F$ equal packets, and some packets are populated into each user's cache without knowing future demands of all users. The quantity $F$ is referred to as the {\em subpacketization}. In the delivery phase, each user requests a file from the server. According to the users' requests as well as the users' cache contents, the server broadcasts coded messages to all users through an error-free shared link, such that each user can recover its requested file.
The goal is to minimize the {\it transmission rate} (or rate) $R$, which is the worst-case normalized transmission amount over all possible demands.
The MN scheme uses an uncoded placement (i.e., some bits of files are directly copied to each user's cache) and a coded delivery policy, such that each message broadcasted by the server can simultaneously satisfy multiple users' demands. In fact, under the natural restriction of uncoded placement, the rate of the MN scheme is shown to be optimal when $K\leq N$ in \cite{WTP2016,WTP2020}. When a file is requested multiple times, Yu et al. \cite{YMA2018} designed a scheme by removing the redundant transmissions in the MN scheme, which is optimal under the constraint of uncoded placement and $K>N$.

However, the subpacketization of the MN scheme increases exponentially with the number of users $K$. To address the large subpacketization problem, \cite{YCTC} characterized a coded caching scheme with uncoded placement and one-shot delivery by a combinatorial array called placement delivery array (PDA). Subsequently, various coded caching schemes with lower subpacketization were proposed based the concept of PDA, such as \cite{SZG,YTCC,CJWY,CJYT,ASK,CJTY,CWZW,WCWC}.
Several variants of the centralized coded caching problem have been studied in the literature, including decentralized coded caching \cite{MN_D}, device to device coded caching \cite{JCM,WCYT}, online caching \cite{PMN}, caching with non-uniform file popularity and demands \cite{NM,ZLW}, multi-access coded caching \cite{HKD,RK,CWLZC,RK2021,SPE,SR_arX2021,SR,MR}, etc.

{\it Notations:}
\begin{itemize}
\item $|\cdot|$ denotes the cardinality of a set; $\min\{a,b\}$ denotes the smallest number in $\{a,b\}$.
\item If $a$ is not divisible by $q$, $\langle a\rangle _q$ denotes the least non-negative residue of $a$ modulo $q$; otherwise, $\langle a\rangle _q:=q$; $[a:b]:=\left\{ a,a+1,\ldots,b\right\}$; $[a:b]_q:=\{\langle a\rangle_q, \langle a+1\rangle_q, \ldots, \langle b\rangle_q\}$.
\item $\lfloor a\rfloor$ denotes the largest integer not greater than $a$; $\lceil a\rceil$ denotes the smallest integer not less than $a$; $a|b$ denotes $a$ divides $b$; $a\nmid b$ denotes $a$ does not divide $b$.
\item $\mathbf{P}(i,j)$ represents the element located in the $i^{\text{th}}$ row and the $j^{\text{th}}$ column of some array $\mathbf{P}$.
\end{itemize}

\subsection{Multi-access caching system}
The multi-access caching system was proposed in \cite{HKD}, which was motivated by the upcoming heterogeneous cellular architecture, which will include dense deployment of wireless access points (APs) and sparse cellular base stations (BSs). APs have small coverage and relatively large data rate, while BSs have large coverage and smaller data rate. The BS transmission rate can be reduced by placing caches at local APs with each user capable of accessing the content stored at multiple nearby APs in addition to receiving the BS broadcast. Compared to the centralized caching system, which is heavily limited by the memory size of each user's cache since a library size is usually much larger than the storage size of a mobile device, each user in the multi-access caching system has access to multiple ($L$) consecutive caches with a cyclic wrap-around, as shown in Fig. \ref{multiaccess-system}.
A central server with $N$ files (denoted by $W_1,W_2,\ldots,W_N$), each of size $1$ unit, is connected to $K$ cache-less users (denoted by $U_1,U_2,\ldots,U_K$) through an error-free shared-link. There are $K$ cache-nodes (denoted by $C_1,C_2,\ldots,C_{K}$), each has a memory size of $M$ units, and each user has access to $L$ consecutive cache-nodes with a cyclic wrap-around, i.e.,  each user $U_i$ has access to the cache-nodes $C_i,C_{\langle i+1\rangle_K},\ldots, C_{\langle i+L-1\rangle_K}$. Such a system is referred to as the $(K,L,M,N)$ multi-access caching system.

\begin{figure}
\centering
\includegraphics[width=4in]{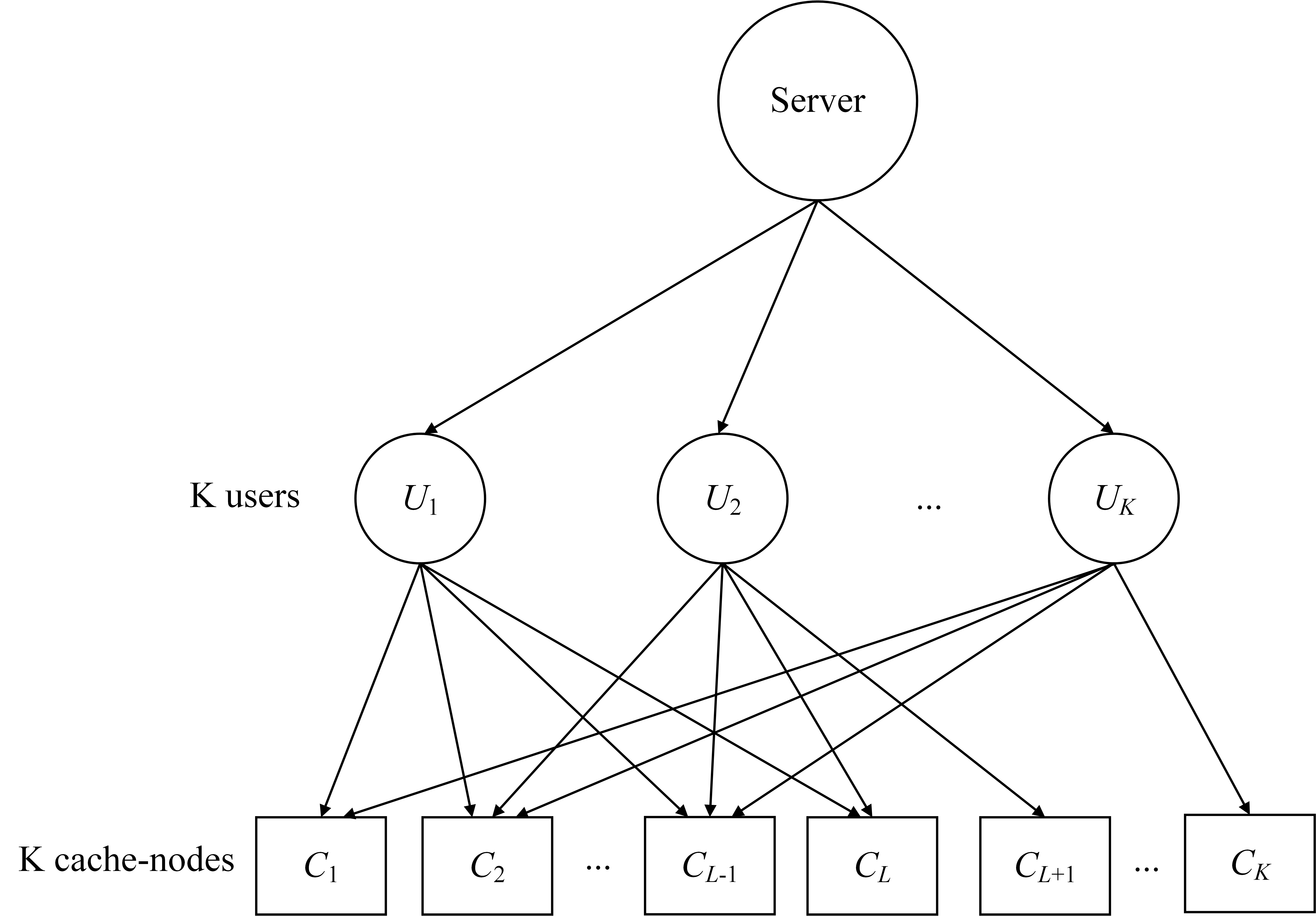}
\vskip 0.2cm
\caption{Multi-access caching system.}\label{multiaccess-system}
\vspace{-0.8cm}
\end{figure}

A $(K,L,M,N)$ multi-access coded caching scheme operates in two phases:
\begin{itemize}
\item {\bf Placement phase:} Each file is divided into $F$ packets of equal size, and some packets are stored in each cache-node without coding, which is referred to as uncoded placement. $\mathcal{Z}_{C_k}$ denotes the cache contents in cache-node $C_k$. The total size of different contents which can be retrieved by each user from its connected cache nodes is defined as the {\em local caching gain}, which is preferable to be as large as possible. The placement phase is done without knowledge of the users' later requests.
\item {\bf Delivery phase:} Each user requests one file from the server. Assume that user $U_k$ requests the file $W_{d_k}$, where $d_k\in[1:N]$, then the vector ${\bf d}=(d_1,d_2,\ldots,d_{K})$ is referred to as the {\em request vector}. According to the request vector ${\bf d}$ and the contents stored in the cache-nodes, the server broadcasts coded messages of total size $R_{\bf d}$ units to all users, such that each user can recover its requested file. The worst case transmission amount $R=\max_{{\bf d} \in [1:N]^K} R_{{\bf d}}$ is referred to as the {\em transmission rate}. The average number of users served by each message is defined as the {\em coded caching gain}, which is also preferable to be as large as possible.
\end{itemize}

\subsection{Previous Results}
The first multi-access caching scheme was proposed in \cite{HKD}, which achieves a transmission rate equal to
\begin{equation}
\label{HKD}
R_{HKD}=\begin{cases}
\frac{K(1-\frac{LM}{N})}{1+\frac{KM}{N}}, \ \text{if} \ L|K\\
\frac{K(1-\frac{M}{N})}{1+\frac{KM}{N}}, \ \text{otherwise}
\end{cases}
\end{equation}
for any $M=\frac{\gamma}{K}N$ where $\gamma\in\left[0:\left\lfloor\frac{K}{L}\right\rfloor\right]$, while the required subpacketization is $F_{HKD}=O({K\choose KM/N})$.
By means of index coding, \cite{RK} proposed a scheme achieving the following transmission rate
\begin{equation}
\label{RK}
R_{RK1}=K\left(1-\frac{LM}{N}\right)^2, \ \forall M=\frac{\gamma}{K}N,\ \gamma\in\left[0:\left\lfloor\frac{K}{L}\right\rfloor\right],
\end{equation}
 which is smaller than the rate in \eqref{HKD} when $K<\frac{KLM}{N}+L$, while the required subpacketization is
 \begin{equation}
 \label{F_RK}
 F_{RK1}=\frac{N}{M}{K-KM/N(L-1)-1\choose KM/N-1}.
 \end{equation}
 Moreover, when $L\geq \frac{K}{2}$ and $N \geq K$, a converse bound for the multi-access caching system under uncoded placement was derived in \cite{RK} and the proposed scheme was shown to be order optimal within a factor of $2$.
 In \cite{CWLZC}, the authors proposed a transformation approach to extend the MN scheme to a multi-access caching scheme with the rate
 \begin{equation}
 \label{CW}
 R_{CW}=\frac{K(1-\frac{LM}{N})}{1+\frac{KM}{N}},\ \forall M=\frac{\gamma}{K}N,\ \gamma\in\left[0:\left\lfloor\frac{K}{L}\right\rfloor\right],
 \end{equation}
 which is smaller than the rate in \eqref{HKD} (when $L\nmid K$ and $L>1$) and the rate in \eqref{RK} (when $K>\frac{KLM}{N}+L$), the required subpacketization is $F_{CW}=K{K-(L-1)KM/N\choose KM/N}$. By dividing the multi-access coded caching problem into a number of special class of index coding problems termed as Structured Index Coding problem and using the solutions obtained for the structured index coding problem, the authors in \cite{RK2021} proposed a new scheme with the rate not greater than the rates in \eqref{RK} and \eqref{CW}, and the required subpacketization is the same as $F_{RK1}$ in \eqref{F_RK}.

In addition, for some special parameters, some multi-access caching schemes with lower subpacketization were proposed. For example, \cite{SPE} studied the case when $\frac{M}{N}=\frac{2}{K}$ (i.e., $\gamma=\frac{KM}{N}=2$), and proposed a scheme achieving the rate $R_{SPE}=\frac{K(1-LM/N)}{g_{SPE}}$ (where the coded caching gain $g_{SPE}>\gamma+1=3$), which is strictly lower than the rate in \eqref{CW},  the required subpacketization is $F_{SPE}=\frac{K(K-2L+2)}{4}$;
\cite{SR_arX2021} studied the case when $\frac{M}{N}=\frac{\gamma}{K}$ where $\gamma$ and $K$ are coprime, and proposed a scheme with the rate $R_{SR1}$ not greater than that in \eqref{RK}, the required subpacketization satisfies $F_{SR1}\leq K^2$; \cite{SR} studied the case when $\frac{M}{N}=\frac{\gamma}{K}$ where $\gamma|K$ and $(K-\gamma L+\gamma)|K$, and proposed a scheme with the rate $R_{SR2}=\frac{K(1-LM/N)(1-(L-1)M/N)}{2}$ and subpacketization $F_{SR2}=K$;
\cite{MR} considered the case when $\frac{M}{N}=\frac{1}{K}$, and proposed a scheme with the rate $R_{MR}=\lceil\frac{K(K-L)}{2+\lfloor\frac{L}{K-L+1}\rfloor+\lfloor\frac{L-1}{K-L+1}\rfloor}\rceil\frac{1}{K}$ and subpacketization $F_{MR}=K$. In fact, the MR scheme in \cite{MR} can be applied to general parameters by carefully designing the placement.










\subsection{Contributions}
In this paper, we consider the the $(K,L,M,N)$ multi-access caching system and our contributions are summarized below.
\begin{itemize}
\item In order to achieve a linear subpacketization and the maximal local caching gain (i.e., the cached contents at any $L$ neighbouring cache-nodes are different such that each user can totally retrieve $LM$ files from its connected cache-nodes), we propose a placement strategy called the {\em consecutive cyclic placement} (i.e., Definition \ref{conti_cyc} in Section \ref{main_r}), which can be represented by a cache-node placement array and a user-retrieve array (similar to PDA).
\item Based on the user-retrieve array under the consecutive cyclic placement, we derive the optimal (maximal) coded caching gain from the perspective of PDA, thus obtaining a tight lower bound on the rate of PDA (i.e., Theorem \ref{maxgain} in Section \ref{main_r}).
\item According to the derived lower bound, we construct a class of PDA under the consecutive cyclic placement, leading to a multi-access coded caching scheme (i.e., Theorem \ref{ach_R_F} in Section \ref{main_r}), which achieves our derived lower bound for some parameters. For other parameters, the achieved coded caching gain is only 1 less than the optimal one. Moreover, the needed subpacketization is less than $K^2$.
\item We provide analytical and numerical comparisons of the proposed scheme with the state-of-the-art. Compared to the schemes with exponential subpacketization, we show that our scheme has a lower rate and a lower subpacketization than the schemes in \cite{HKD,RK}, and has a slightly higher rate and a lower subpacketization than the schemes in \cite{CWLZC,RK2021}. Compared to the schemes with linear subpacketization (less than $K^2$), we show that our scheme has a larger coded caching gain than the schemes in \cite{SR,SR_arX2021,MR}, and has a smaller coded caching gain than the SPE scheme in \cite{SPE}. However, the SPE scheme is only applicable to specific memory ratio, i.e., $\frac{M}{N}=\frac{2}{K}$, while our scheme is applicable to arbitrary memory ratio in $\{\frac{\gamma}{K}|\gamma\in[0:\lfloor\frac{K}{L}\rfloor]\}$.
\end{itemize}

The rest of this paper is organized as follows. In Section \ref{intr_PDA}, we introduce the definition of PDA and the relationship between a PDA and a centralized coded caching scheme. In Section \ref{main_r}, we list the main results of this paper and provide the performance analysis of the proposed scheme in Theorem \ref{ach_R_F}. In Section \ref{de_scheme}, we give the construction of PDA under the consecutive cyclic placement in two cases and provide the proof of Theorem \ref{ach_R_F}. Finally, Section \ref{conclusion} concludes the paper and some proofs are provided in the Appendices.

\section{Placement and delivery array}
\label{intr_PDA}
Placement delivery array (PDA) was originally proposed in \cite{YCTC}, which is a combinatorial array used to design coded caching schemes with uncoded placement and one-shot delivery, i.e., each user can recover any requested file packet from at most one transmitted message with the help of the contents it has access to.
\begin{definition}(\cite{YCTC})
\label{def-PDA}
For  positive integers $K,F, Z$ and $S$, an $F\times K$ array  $\mathbf{P}$ composed of a specific symbol $``*"$  and $S$ integers in $[1:S]$, is called a $(K,F,Z,S)$ placement delivery array (PDA) if it satisfies the following conditions:
\begin{enumerate}
  \item [{\bf C1:}] The symbol $``*"$ appears $Z$ times in each column;
  \item [{\bf C2:}] Each integer in $[1:S]$ occurs at least once in the array;
  \item [{\bf C3:}] For any two distinct entries $\mathbf{P}(j_1,k_1)$ and $\mathbf{P}(j_2,k_2)$, if    $\mathbf{P}(j_1,k_1)=\mathbf{P}(j_2,k_2)=s\in[1:S]$, then $\mathbf{P}(j_1,k_2)=\mathbf{P}(j_2,k_1)=*$, i.e., the corresponding $2\times 2$  subarray formed by rows $j_1,j_2$ and columns $k_1,k_2$ must be of the following form
    \begin{eqnarray*}
    \left(\begin{array}{cc}
      s & *\\
      * & s
    \end{array}\right)~\textrm{or}~
    \left(\begin{array}{cc}
      * & s\\
      s & *
    \end{array}\right).
  \end{eqnarray*}

\end{enumerate}
Furthermore, if each integer appears exactly $g$ times in $\mathbf{P}$, $\mathbf{P}$ is called a $g$-regular $(K,F,Z,S)$ PDA, $g$-$(K,F,Z,S)$ PDA or $g$-PDA for short.
\end{definition}

A $(K,F,Z,S)$ PDA $\mathbf{P}$ can generate a coded caching scheme as follows.
\begin{itemize}
\item {\bf Placement phase:} Split each file $W_n$ into $F$ packets, i.e., $W_{n}=\{W_{n,j}\ |\ j\in[1:F]\}$. User $U_k$ has access to $\mathcal{Z}_{U_k}=\{W_{n,j}\ |\ \mathbf{P}(j,k)=*, j\in[1:F], n\in [1:N]\}$, where $k\in [1:K]$.
\item {\bf Delivery phase:} Assume that the request vector is ${\bf d}=(d_1,d_2,\ldots,d_K)$, for each $s\in[1:S]$, the server broadcasts $\bigoplus_{\mathbf{P}(j,k)=s,j\in[1:F],k\in[1:K]}W_{d_{k},j}$ to all users at time slot $s$.
\end{itemize}

\begin{lemma}(\cite{YCTC})
\label{th-Fundamental}
A $(K,F,Z,S)$ PDA can generate a coded caching scheme with the user accessible memory ratio $\frac{Z}{F}$, subpacketization $F$ and  rate $R=\frac{S}{F}$.
\end{lemma}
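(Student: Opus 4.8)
The plan is to verify the three quantitative claims (accessible memory ratio, subpacketization, and rate) directly from the array parameters, and then to establish correctness of the generated scheme, namely that every user can reconstruct its requested file. The quantitative claims are essentially bookkeeping from conditions C1 and C2, whereas decodability is the only substantive step and is precisely where condition C3 is used.

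First I would read off the easy quantities. Since each file is partitioned into $F$ packets in the placement phase, the subpacketization is $F$ by construction. By C1 the symbol $``*"$ appears exactly $Z$ times in column $k$, so for every file $W_n$ user $U_k$ can access precisely the $Z$ packets $\{W_{n,j}\,|\,\mathbf{P}(j,k)=*\}$; hence each user accesses a fraction $Z/F$ of every file, giving the accessible memory ratio $Z/F$. For the rate, C2 guarantees that each integer $s\in[1:S]$ labels at least one entry, so the server transmits exactly $S$ coded messages, one per time slot $s$. Each message is an XOR of packets of common size $1/F$, hence itself has size $1/F$, and the total normalized transmission is $S\cdot\frac{1}{F}=\frac{S}{F}$, matching the claimed rate $R$.

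The core step is decodability. Fixing a request vector $\mathbf{d}$ and a user $U_k$, I must show that $U_k$ recovers every packet $W_{d_k,j}$ for $j\in[1:F]$. The packets with $\mathbf{P}(j,k)=*$ lie in $\mathcal{Z}_{U_k}$ and are known directly, so it remains to handle a packet with $\mathbf{P}(j,k)=s\in[1:S]$. Such a packet is a summand of the message $X_s=\bigoplus_{\mathbf{P}(j',k')=s}W_{d_{k'},j'}$ broadcast at time slot $s$, so it suffices to show that $U_k$ can cancel every other summand, i.e., can compute each $W_{d_{k'},j'}$ with $\mathbf{P}(j',k')=s$ and $(j',k')\neq(j,k)$.

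Here C3 does the work, and this cancellation is the main obstacle, being the only place care is needed. I would first observe that C3 forbids a repeated integer within a single row or column: taking two distinct entries in the same row (or column) with equal value $s$ would force a contradictory $``*"$ via the $2\times 2$ subarray condition. Consequently every interfering entry satisfies $j'\neq j$ and $k'\neq k$. Applying C3 to the distinct entries $\mathbf{P}(j,k)=\mathbf{P}(j',k')=s$ then yields $\mathbf{P}(j',k)=*$, which means packet $j'$ of \emph{every} file, and in particular $W_{d_{k'},j'}$, is accessible to $U_k$. Thus $U_k$ subtracts all interfering terms from $X_s$ and is left with $W_{d_k,j}$, completing the recovery and hence the proof; the remaining claims follow immediately from C1 and C2.
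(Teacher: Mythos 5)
Your proof is correct and follows essentially the same route as the paper's own justification: conditions C1 and C2 give the memory ratio $\frac{Z}{F}$ and the message count $S$ (hence rate $\frac{S}{F}$), while C3 supplies decodability because every interfering packet in a multicast message is accessible to the user. Your extra step ruling out repeated integers within a row or column is a worthwhile detail the paper leaves implicit, but it is an elaboration of the same argument rather than a different approach.
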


For a $(K,F,Z,S)$ PDA $\mathbf{P}$, columns and rows represent the user indexes and the packet indexes respectively. If $\mathbf{P}(j,k)=*$, it means that user $U_k$ has access to the $j^{\text{th}}$ packet of all files. Condition C1 of Definition \ref{def-PDA} implies that each user has access to the same memory size and the user accessible memory ratio is $\frac{Z}{F}$. If $\mathbf{P}(j,k)=s$ is an integer, it implies that the $j^{\text{th}}$ packet of all files is not accessible to user $U_k$, and the server broadcasts a multicast message (i.e. the XOR of all the requested packets indicated by $s$) to all the users at time slot $s$. Condition C3 of Definition \ref{def-PDA} ensures that each user can get its desired packet, since all the other packets in the multicast message are accessible to it. The occurrence number of integer $s$ in $\mathbf{P}$, denoted by $g_s$, is the coded caching gain at time slot $s$, since the message broadcasted at time slot $s$ can serve $g_s$ users simultaneously. Condition C2 of Definition \ref{def-PDA} implies that the number of messages broadcasted by the server is exactly $S$, so the rate is $R=\frac{S}{F}$.

\section{Main results}
\label{main_r}
In this paper, we focus on the $(K,L,M,N)$ multi-access caching system with cache-node memory size $M=\frac{\gamma N}{K}$ for $\gamma\in [0:\lfloor \frac{K}{L}\rfloor]$. In order to achieve a linear subpacketization and the maximal local caching gain, we propose the following placement strategy: each file $W_n$ is split into $K$ equal-length subfiles, i.e., $W_n=\{W_{n,i}|i\in[1:K]\}$, and cache-node $C_k$ caches
\begin{equation}
\label{eq_cache}
\mathcal{Z}_{C_k}=\{W_{n,\langle k-L+1\rangle_K},W_{n,\langle k-2L+1\rangle_K},\ldots,W_{n,\langle k-\gamma L+1\rangle_K}|n\in[1:N]\},
\end{equation}
for any $k\in[1:K]$. Then the total size of the contents stored by each cache-node is $\frac{\gamma N}{K}=M$ units, which satisfies the memory size constraint. Furthermore, any $L$ neighbouring cache-nodes do not cache any same subfile, then each user can retrieve $LM$ units from its connected cache-nodes, thus achieving the maximal local caching gain.

 \begin{definition}
 \label{conti_cyc}
 For a $(K,L,M,N)$ multi-access caching system, for any cache-node memory size $M=\frac{\gamma}{K}N$ where $\gamma\in[0:\lfloor\frac{K}{L}\rfloor]$, if each file is split into $K$ equal-length subfiles, and each cache-node $C_k$ caches $\mathcal{Z}_{C_k}$ in \eqref{eq_cache}, i.e., the $j^{\text{th}}$ subfile of all files is stored by the cache-nodes $C_{\langle j+L-1\rangle_K}, C_{\langle j+2L-1\rangle_K}, \ldots, C_{\langle j+\gamma L-1\rangle_K}$, so it can be retrieved by the users $U_j,U_{\langle j+1\rangle_K},\ldots,$ $U_{\langle j+\gamma L-1\rangle_K}$, the placement is called the {\em consecutive cyclic placement}.
 \end{definition}
Similar to PDA, the consecutive cyclic placement can be represented by a cache-node placement array and a user-retrieve array. For example, when $K=10, L=3, \gamma=2$, the cache-node placement array $\mathbf{C}$ and the user-retrieve array $\mathbf{U}$ are shown in Fig \ref{ca_node}, where $\mathbf{C}(j,k)=*$ represents that the $j^{\text{th}}$ subfile of all files is cached by cache-node $C_k$, and $U(j,k)=*$ represents that the $j^{\text{th}}$ subfile of all files can be retrieved by user $U_k$.
\begin{figure}
  \centering
  \includegraphics[width=5.5in]{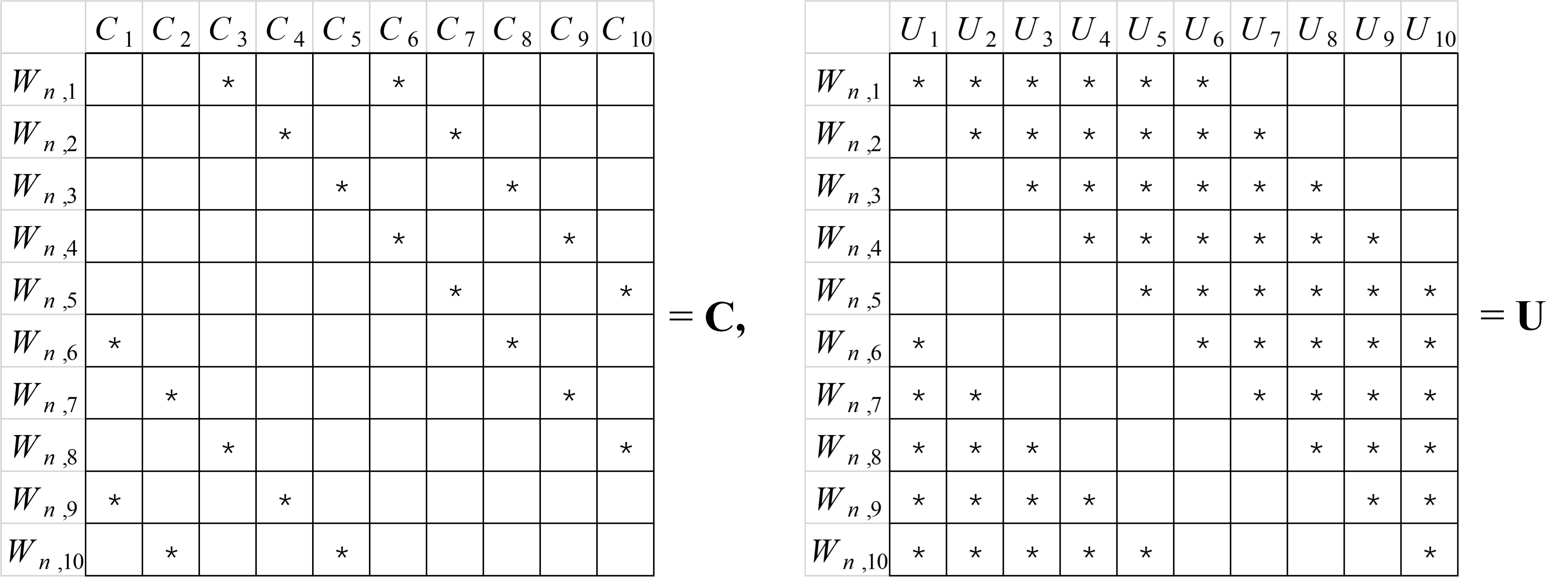}
  \caption{The cache-node placement array $\mathbf{C}$ and the user-retrieve array $\mathbf{U}$ when $K=10,L=3,\gamma=2$.}\label{ca_node}
\vspace{-0.8cm}
\end{figure}

Under the consecutive cyclic placement, we derive the maximal coded caching gain from the perspective of PDA, and obtain a lower bound on the rate of PDA .
\begin{theorem}
\label{maxgain}
For a $(K,L,M,N)$ multi-access caching system, for any cache-node memory size $M=\frac{\gamma}{K}N$ where $\gamma\in[0:\lfloor\frac{K}{L}\rfloor]$, let $t=\gamma L$, the maximal coded caching gain $g$ of any PDA under the consecutive cyclic placement satisfies
\begin{equation}
\label{gupperbound}
g\leq g^*=\begin{cases}
2\lfloor\frac{K}{K-t+1}\rfloor, \ \ \text{if} \ \langle K\rangle_{K-t+1}\leq \lfloor\frac{K-t}{2}\rfloor \ \text{or} \ (K-t+1)|K \\
2\lfloor\frac{K}{K-t+1}\rfloor+1, \ \ \text{otherwise}, \\
\end{cases}
\end{equation}
thus, the achievable rate $R$ of any PDA under the consecutive cyclic placement satisfies
\begin{equation}
\label{Rlowerbound}
R\geq R^*=\begin{cases}
\frac{K-t}{2\lfloor\frac{K}{K-t+1}\rfloor}, \ \ \text{if} \ \langle K\rangle_{K-t+1}\leq \lfloor\frac{K-t}{2}\rfloor \ \text{or} \ (K-t+1)|K\\
\frac{K-t}{2\lfloor\frac{K}{K-t+1}\rfloor+1}, \ \ \text{otherwise}. \\
\end{cases}
\end{equation}
\end{theorem}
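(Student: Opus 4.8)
The plan is to carry out the entire argument on the user-retrieve array $\mathbf{U}$, whose star pattern is completely fixed by the consecutive cyclic placement: writing $t=\gamma L$, subfile $j$ is retrievable exactly by $U_j,U_{\langle j+1\rangle_K},\ldots,U_{\langle j+t-1\rangle_K}$, so a cell $(j,k)$ is a star iff $k\in[j:j+t-1]_K$. Any PDA $\mathbf{P}$ built under this placement therefore has exactly $t$ stars per column, hence $K-t$ integers per column, and since $F=K$ the array has exactly $K(K-t)$ non-star cells, all carrying integers. If I can show that each integer appears at most $g^*$ times, then the maximal gain satisfies $g\le g^*$, and since the $S$ distinct integers cover all $K(K-t)$ integer cells, $S\ge K(K-t)/g^*$, giving $R=S/F=S/K\ge (K-t)/g^*=R^*$. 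So the whole theorem reduces to bounding the multiplicity of one fixed integer $s$.

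Next I would translate condition C3 into geometry. Let the cells carrying $s$ be $(j_a,k_a)$ for $a=1,\ldots,g$. Condition C3 immediately forces the rows $j_a$ to be distinct and the columns $k_a$ to be distinct (two copies of $s$ in a common row or column would demand a star where an integer sits). Put $e_a:=\langle j_a-k_a\rangle_K$; the cell being an integer cell means precisely $e_a\in\{1,\ldots,K-t\}$. Applying C3 to the pair $(a,b)$ forces the cross cells $(j_a,k_b)$ and $(j_b,k_a)$ to be stars, which, read through the star pattern above, says exactly that the length-$(K-t)$ cyclic arc $\bar W_a:=[\,j_a-(K-t):j_a-1\,]_K$ (the complement of the star window of row $j_a$) contains its own column $k_a=\langle j_a-e_a\rangle_K$ but no other column $k_b$. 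Thus the problem becomes purely combinatorial: place $g$ distinct columns on $\mathbb{Z}_K$, each equipped with an offset $e_a\in[1:K-t]$, so that every exclusion arc $\bar W_a$ isolates its own column.

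I would then order the columns cyclically as $k_{a_1},\ldots,k_{a_g}$ and let $\delta_i$ be the forward gap from $k_{a_i}$ to $k_{a_{i+1}}$ (indices mod $g$), so $\sum_{i=1}^g\delta_i=K$. The key step is the equivalence of the isolation requirement with the gap inequalities
\[
\delta_i\ \ge\ \max\{e_{a_i},\,(K-t+1)-e_{a_{i+1}}\},\qquad i=1,\ldots,g .
\]
The first term comes from the part of $\bar W_{a_i}$ lying forward of $k_{a_i}$ (reach $e_{a_i}-1$, so the successor must sit at distance $\ge e_{a_i}$), the second from the part of $\bar W_{a_{i+1}}$ lying backward of $k_{a_{i+1}}$ (reach $(K-t)-e_{a_{i+1}}$, so the predecessor must sit at distance $\ge (K-t+1)-e_{a_{i+1}}$). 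Summing, and using $\max\{x,y\}\ge(x+y)/2$ together with the cyclic telescoping $\sum_i e_{a_i}=\sum_i e_{a_{i+1}}$, yields
\[
K=\sum_{i=1}^g\delta_i\ \ge\ \tfrac12\,g\,(K-t+1),
\]
so $g\le\big\lfloor 2K/(K-t+1)\big\rfloor$. Finally I would reconcile this with the piecewise $g^*$: writing $K=q(K-t+1)+r$ with $0\le r\le K-t$, one has $\lfloor 2K/(K-t+1)\rfloor=2q+\lfloor 2r/(K-t+1)\rfloor$, and the last floor is $0$ exactly when $2r\le K-t$ (i.e. $\langle K\rangle_{K-t+1}\le\lfloor(K-t)/2\rfloor$, or $r=0$, the case $(K-t+1)\mid K$) and is $1$ otherwise — precisely the dichotomy in \eqref{gupperbound}, after which \eqref{Rlowerbound} follows from $R\ge(K-t)/g^*$.

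I expect the delicate point to be the \emph{completeness} of the gap inequalities, rather than their necessity: I must argue that, because each exclusion arc has length only $K-t$ and the bound $\delta_i\ge\max\{e_{a_i},(K-t+1)-e_{a_{i+1}}\}$ already exceeds both the forward reach $e_{a_i}-1$ and the backward reach $(K-t)-e_{a_{i+1}}$ of the relevant arcs, no \emph{non-adjacent} column can ever intrude into a given $\bar W_a$, so that the adjacent-neighbour constraints genuinely capture all of C3 and not merely a subset of it. Once this is secured, the averaging inequality and the residue bookkeeping matching $\lfloor 2K/(K-t+1)\rfloor$ to the stated cases are routine.
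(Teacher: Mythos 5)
Your proof is correct, and its skeleton matches the paper's: fix one integer $s$, use C3 together with the cyclic star pattern to force all cross cells to be stars, turn this into a circular packing constraint on the columns, deduce $g(K-t+1)\leq 2K$, and finish by counting, $S\geq F(K-t)/g^*$. The difference lies in how the packing bound is established. The paper's key lemma (Proposition \ref{relationK_g}) states $(g-2)K\leq g(t-1)$ --- algebraically identical to your $g\leq 2K/(K-t+1)$ --- and proves it by a covering argument on the length-$t$ windows $[j_v:j_v+t-1]_K$: each column $k_u$ must lie in the intersection of the other $g-1$ windows, and $K$ is maximized when these $(g-1)$-wise intersections are singletons, so that each element of $[1:K]$ is covered at least $g-2$ times; this argument is only sketched (illustrated for $g=3$). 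You instead work with the complementary exclusion arcs and derive explicit adjacent-gap inequalities $\delta_i\geq\max\{e_{a_i},(K-t+1)-e_{a_{i+1}}\}$, then sum using $\max\{x,y\}\geq(x+y)/2$ and cyclic telescoping of the offsets. Your route is more self-contained and rigorous where the paper is hand-wavy, and your endgame is cleaner: since $g$ is an integer, $g\leq\lfloor 2K/(K-t+1)\rfloor$, and a single residue computation identifies this floor with $g^*$ in \eqref{gupperbound}, whereas the paper re-derives the two cases by separate chains of inequalities on $K$.

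Two remarks. First, your anticipated ``delicate point'' about completeness of the gap inequalities is a non-issue: for a converse bound you only need that C3 \emph{implies} the adjacent-gap inequalities (necessity), which you have already shown; whether those inequalities capture all of C3 is relevant only to achievability, which is settled by the paper's constructions (Theorem \ref{ach_R_F}), not by this theorem. Second, you should not hardwire $F=K$: a PDA under the consecutive cyclic placement may split each subfile into $m$ packets, giving $F=mK$ --- the paper's own Construction \ref{constr2} does exactly this --- so you need the one-line observation that two cells carrying the same integer cannot lie in rows belonging to the same subfile (C3 would then force a star onto a non-star cell), after which your argument applies verbatim to the subfile indices of the rows, and the count becomes $S\geq F(K-t)/g^*$, still yielding $R=S/F\geq(K-t)/g^*$. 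The paper's proof shares this implicit restriction, so this is a caveat to tidy up rather than a gap in your approach.
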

The proof of Theorem \ref{maxgain} is given in Appendix \ref{pr_Theorem_Rlowerbound}. According to the lower bound in Theorem \ref{maxgain}, we construct a class of PDA under the consecutive cyclic placement, which generates the following multi-access caching scheme. It is worth noting that for some cases we need to further divide each subfile into several packets, so that we can construct a regular PDA (i.e., each non-star entry appears the same times).

\begin{theorem}
\label{ach_R_F}
For a $(K,L,M,N)$ multi-access caching system, for any cache-node memory size $M=\frac{\gamma}{K}N$ where $\gamma\in[0:\lfloor\frac{K}{L}\rfloor]$, let $t=\gamma L$, the rate in \eqref{ach_rate} is achievable under the consecutive cyclic placement with the subpacketization in \eqref{packet_number}.
\begin{figure}
\begin{equation}
\label{ach_rate}
R_{new}=\begin{cases}
\frac{(K-t)(K-t+1)}{2K}, \ \ \text{if} \ (K-t+1)|K \ \text{or} \ K-t=1\\
\frac{K-t}{2\lfloor\frac{K}{K-t+1}\rfloor+1}, \ \ \text{if} \ \langle K\rangle_{K-t+1}=K-t \ \text{and} \ K-t>1\\
\frac{K-t}{2\lfloor\frac{K}{K-t+1}\rfloor}, \ \ \text{otherwise.}
\end{cases}
\end{equation}
\vspace{-1cm}
\end{figure}
\begin{figure}
\begin{equation}
\label{packet_number}
F_{new}=\begin{cases}
K,\ \ \ \ \text{if} \ (K-t+1)|K \ \text{or} \ K-t=1\\
\left(2\lfloor\frac{K}{K-t+1}\rfloor+1\right)K, \ \ \text{if} \ \langle K\rangle_{K-t+1}=K-t\ \text{and} \ K-t>1\\
2\lfloor\frac{K}{K-t+1}\rfloor K, \ \ \text{otherwise}.
\end{cases}
\end{equation}
\vspace{-1cm}
\end{figure}
\end{theorem}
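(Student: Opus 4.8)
The plan is to prove achievability by constructing, in each parameter regime, a placement delivery array whose ``$*$''-pattern is exactly the one forced by the consecutive cyclic placement, and then invoking Lemma~\ref{th-Fundamental}. Since the placement of Definition~\ref{conti_cyc} is fixed, the star positions are prescribed: in the $K\times K$ user-retrieve array, writing $\delta=\langle k-j\rangle_K$, the cell $(j,k)$ carries a ``$*$'' exactly when $\delta\in[0:t-1]$ and an integer exactly when $\delta\in[t:K-1]$. Thus $Z=t$, every column has $r:=K-t$ integer cells, and there are $Kr$ integer cells in all. If these cells can be filled so as to form a $g$-regular PDA, then $S=Kr/g$ and the induced rate equals $R=S/F=r/g$ independently of how finely the subfiles are split, so the entire task reduces to realizing the largest admissible regularity $g$ on this fixed star pattern, splitting each subfile into several packets only when this is needed to make the enlarged array regular.

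First I would rewrite Condition~C3 of Definition~\ref{def-PDA} in circulant coordinates. Index an integer cell by its row $j\in\mathbb{Z}_K$ and its difference $a:=\delta-t\in[0:r-1]$, so the cell lies in column $\langle j+t+a\rangle_K$. For two integer cells with differences $a_1,a_2$ and row gap $\Delta=\langle j_2-j_1\rangle_K$, a direct computation shows that the two ``cross'' cells are both stars, i.e.\ C3 holds, precisely when $\langle t+a_1-\Delta\rangle_K\in[0:t-1]$ and $\langle t+a_2+\Delta\rangle_K\in[0:t-1]$, equivalently when $\Delta$ lies in the intersection of the two length-$t$ cyclic intervals $[a_1+1:a_1+t]$ and $[K-a_2-t:K-a_2-1]$ modulo $K$. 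This single condition is what every label class must obey: any two cells sharing a symbol must have their row gap inside the intersection associated with their two differences.

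Guided by the counting $S=Kr/g$ with $g\approx 2q$ where $q:=\lfloor K/(r+1)\rfloor$, I would build each label class by pairing a length-$q$ run of cells of one difference $a_1$ with a length-$q$ run of cells of a difference $a_2$; the factor $2$ in the gain is exactly this pairing. Indexing classes by unordered pairs $\{a_1,a_2\}\subseteq[0:r-1]$ gives $\binom{r+1}{2}=\tfrac{r(r+1)}{2}$ classes, and since difference $a$ owns $K$ cells, it contributes one block of $2q$ cells to the diagonal class $\{a,a\}$ and one block of $q$ cells to each of the $r-1$ off-diagonal classes, using $2q+(r-1)q=(r+1)q$ cells. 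When $(r+1)\mid K$ one has $(r+1)q=K$, so every difference's cells split evenly, all classes have size $2q$, and we obtain a $2q$-regular PDA with $F=K$ and $S=\tfrac{r(r+1)}{2}$ — the first line of \eqref{ach_rate}--\eqref{packet_number}, with the degenerate $r=1$ giving a single class of size $K$. When $(r+1)\nmid K$ the count $K$ no longer divides evenly; I would therefore first split each subfile into $c$ packets, taking $c=2q+1$ when $\langle K\rangle_{K-t+1}=r$ and $c=2q$ otherwise, which replaces $K$ cells per difference by $cK$ cells that do admit balanced blocks, and then repeat the pairing to obtain a $c$-regular PDA with $F=cK$, $S=Kr$, hence $R=r/c$ — the remaining two lines of \eqref{ach_rate}--\eqref{packet_number}. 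In every case the row offsets inside each block are chosen so that all within-class row gaps fall in the relevant interval intersection, which verifies C3; C1 is automatic and C2 holds because every class is nonempty.

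The hard part will be discharging C3 and regularity simultaneously across the case split. Verifying C3 reduces to checking, for each pairing $\{a_1,a_2\}$, that the intervals $[a_1+1:a_1+t]$ and $[K-a_2-t:K-a_2-1]$ overlap in a window long enough to seat all $q$ (or $2q$) prescribed row gaps — elementary modular interval arithmetic, but it must hold uniformly in $a_1,a_2$, and this is where the comparison of $\langle K\rangle_{K-t+1}$ with $\lfloor\tfrac{r}{2}\rfloor$ (and the boundary value $\langle K\rangle_{K-t+1}=r$) enters, since it controls whether the $\langle K\rangle_{K-t+1}$ leftover cells per difference can be assembled into an extra half-class to push the gain up to $2q+1$ or only to $2q$. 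Comparing the achieved $g$ with the bound $g^{\ast}$ of Theorem~\ref{maxgain} then shows optimality in all regimes except $\lfloor\tfrac{r}{2}\rfloor<\langle K\rangle_{K-t+1}<r$, where the construction reaches $2q=g^{\ast}-1$, a coded caching gain one short of optimal, exactly as asserted.
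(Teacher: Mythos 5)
Your high-level framework matches the paper's: fix the circulant star pattern forced by the consecutive cyclic placement, fill the integer cells so as to form a regular PDA, split each subfile into extra packets when $(K-t+1)\nmid K$, and invoke Lemma~\ref{th-Fundamental}; your circulant reformulation of Condition C3 (the row gap $\Delta$ between two cells of differences $a_1,a_2$ must lie in $[a_1+1:a_1+t]_K\cap[K-t-a_2:K-a_2-1]_K$) is also correct. The genuine gap is in the combinatorial core: the symbol classes you propose, indexed by \emph{arbitrary} unordered pairs $\{a_1,a_2\}\subseteq[0:r-1]$ (including diagonal pairs $\{a,a\}$), cannot satisfy C3, no matter how the row offsets inside each block are chosen. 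In your own reformulation, the two constraint intervals both have length $t$ but start at $a_1+1$ and $r-a_2$ respectively, so they coincide if and only if $a_1+a_2=r-1$; this is precisely why the paper pairs track $l$ exclusively with the complementary track $K-t+1-l$, and handles the self-complementary middle track separately. For any non-complementary pair the intersection shrinks and your classes become unrealizable. Concretely, take $K=10$, $t=6$ (so $r=4$, $q=2$, the paper's Example~\ref{example2}): for your diagonal class $\{0,0\}$, two cells of difference $0$ are C3-compatible only if their row gap lies in $[1:6]_{10}\cap[4:9]_{10}=[4:6]_{10}$, so a class of $2q=4$ such cells would need four consecutive cyclic gaps, each at least $4$, summing to $K=10$ — impossible. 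The off-diagonal class $\{0,1\}$ also fails: placing difference-$0$ cells at rows $1$ and $x_2\in\{5,6,7\}$, any difference-$1$ cell at row $y$ must satisfy $\langle y-1\rangle_{10}\in[3:6]$ and $\langle y-x_2\rangle_{10}\in[3:6]$, and these two sets are disjoint for every admissible $x_2$. So the verification you defer as ``elementary modular interval arithmetic'' is exactly the step that breaks.

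Note that your global counting ($S=Kr/g$, the values of $F$, and the case split on $\langle K\rangle_{K-t+1}$) agrees with the paper's, which is why the totals look plausible; but counting does not certify C3, and with your pairing C3 is provably violated. The repair is the paper's design: use only complementary pairs $a_1+a_2=r-1$, each symbol taking $q$ cells from each of the two complementary differences with rows spaced $K-t+1$ apart (and, when $r$ is odd, the self-paired middle difference $a=\frac{r-1}{2}$ with spacing $\frac{K-t+1}{2}$), each pair of complementary tracks being subdivided into $K-t+1$ distinct symbols rather than forming a single class. The same complementary alternation, carried across the $g_{new}$ row-units of the enlarged array, is what makes Construction~\ref{constr2} work in the non-divisible case; a ``repeat the pairing'' step based on arbitrary pairs would fail there for the same reason.
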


The proof of Theorem \ref{ach_R_F} is given in Section \ref{pr_th_ach_R_F}.
\begin{remark}
It is worth noting that when $\langle K\rangle_{K-t+1}\leq \lfloor\frac{K-t}{2}\rfloor$ or $\langle K\rangle_{K-t+1}=K-t$ or $(K-t+1)|K$ or $K-t=1$, the rate of the proposed scheme in Theorem \ref{ach_R_F} achieves the lower bound in Theorem \ref{maxgain}. For other parameters, the achieved coded caching gain of the proposed scheme in Theorem \ref{ach_R_F} is $g_{new}=2\lfloor\frac{K}{K-t+1}\rfloor$, which is only $1$ less than the optimal coded caching gain (i.e., $g^*=2\lfloor\frac{K}{K-t+1}\rfloor+1$) in Theorem \ref{maxgain}. Moreover, the subpacketization of the proposed scheme is less than $K^2$.
\end{remark}

Next we will compare our scheme with the state-of-the-art. Since the rate expression of the schemes in \cite{RK2021,MR,SR_arX2021,SPE} is complicated, we are not able to compare our scheme with these schemes analytically.
\begin{itemize}
\item {\it Comparison to the RK1 scheme in \cite{RK}:}
When $\frac{M}{N}=\frac{\gamma}{K}$, the rate of the RK1 scheme (given in \eqref{RK}) is $R_{RK1}=(K-\gamma L)(1-\frac{\gamma L}{K})$, while the rate of the proposed scheme in Theorem \ref{ach_R_F} is $R_{new}\leq \frac{K-\gamma L}{2\lfloor\frac{K}{K-\gamma L+1}\rfloor}$. Hence, when $2\lfloor\frac{K}{K-\gamma L+1}\rfloor>\frac{K}{K-\gamma L}$, we have $R_{new}<R_{RK1}$. In particular, when $(K-\gamma L+1)|K$ and $K-\gamma L>1$, we have $R_{new}<R_{RK1}$. Moreover, in this case the subpacketization of the proposed scheme is $K$, while the subpacketization of the RK1 scheme is  $F_{RK1}=\frac{K}{\gamma}{K-\gamma(L-1)-1\choose \gamma-1}$, which is exponential with the number of users.

\item {\it Comparison to the CW scheme in \cite{CWLZC}:}
When $\frac{M}{N}=\frac{\gamma}{K}$, the rate of the CW scheme in \cite{CWLZC} (given in \eqref{CW}) is $R_{CW}=\frac{K-\gamma L}{\gamma+1}$, while the rate of the proposed scheme is $R_{new}\leq \frac{K-\gamma L}{2\lfloor\frac{K}{K-\gamma L+1}\rfloor}$.
\begin{itemize}
\item When $\gamma=1$, we have $2\lfloor\frac{K}{K-\gamma L+1}\rfloor\geq 2=\gamma+1$, then $R_{new}\leq R_{CW}$.
\item When $\gamma>1$ and $\gamma$ is odd, if $K\leq \gamma L+2L-1+\frac{2(L-1)}{\gamma-1}$ (i.e., $\frac{K}{K-\gamma L+1}\geq \frac{\gamma+1}{2}$, which implies $2\lfloor\frac{K}{K-\gamma L+1}\rfloor\geq \gamma+1$), then $R_{new}\leq R_{CW}$.
\item When $\gamma>1$ and $\gamma$ is even, if $K\leq \gamma L+2L-1-\frac{2}{\gamma}$ (i.e., $\frac{K}{K-\gamma L+1}\geq \frac{\gamma+2}{2}$, which implies $2\lfloor\frac{K}{K-\gamma L+1}\rfloor> \gamma+1$), then $R_{new}<R_{CW}$.
\end{itemize}
Moreover, the subpacketization level of the proposed scheme is less than $K^2$, while the subpacketization level of the CW scheme is $F_{CW}=K{K-\gamma(L-1)\choose \gamma}$, which is exponential with the number of users.

\item {\it Comparison to the SR2 scheme in \cite{SR}:}
When $\gamma|K$ and $(K-\gamma L+\gamma)|K$, the SR2 scheme in \cite{SR} has the rate $R_{SR2}=\frac{(K-\gamma L)(K-\gamma L+\gamma)}{2K}$ and subpacketization $R_{SR2}=K$.
\begin{itemize}
\item If $\gamma=1$, we have $R_{new}=R_{SR2}$ and $F_{new}=F_{SR2}$.
\item If $\gamma>1$ and $(K-\gamma L+1)|K$, we have $R_{new}=\frac{(K-\gamma L)(K-\gamma L+1)}{2K}$ and $F_{new}=K$, then $R_{new}<R_{SR2}$ and $F_{new}=F_{SR2}$.
\item If $\gamma>1$ and $\langle K\rangle_{K-\gamma L+1}=K-\gamma L$, we have $R_{new}=\frac{K-\gamma L}{2\lfloor\frac{K}{K-\gamma L+1}\rfloor+1}\leq \frac{K-\gamma L}{\frac{2 K}{K-\gamma L+\gamma}+1}$ and $F_{new}=(2\lfloor\frac{K}{K-\gamma L+1}\rfloor+1)K$, then $R_{new}<R_{SR2}$ and $F_{new}>F_{SR2}$.
\item For other cases, we have $R_{new}=\frac{K-\gamma L}{2\lfloor\frac{K}{K-\gamma L+1}\rfloor}\leq \frac{K-\gamma L}{\frac{2 K}{K-\gamma L+\gamma}}$ and $F_{new}=2\lfloor\frac{K}{K-\gamma L+1}\rfloor K$, then $R_{new}\leq R_{SR2}$ and $F_{new}>F_{SR2}$.
\end{itemize}
Moreover, the SR2 scheme is only applicable to specific memory ratios, while the proposed scheme is applicable to arbitrary memory ratio in $\{\frac{\gamma}{K}|\gamma\in[0:\lfloor\frac{K}{L}\rfloor]\}$.

\item {\it Numerical comparison with the schemes in \cite{HKD,RK,CWLZC,RK2021}:}
When $K=36$ and $L=5$, the memory-rate and memory-subpacketization tradeoffs of the proposed scheme and the schemes in \cite{HKD,RK,CWLZC,RK2021} are shown in Fig \ref{compareRF}. It can be seen that our scheme has a lower rate and simultaneously a lower subpacketization than the schemes in \cite{HKD,RK}, and has a lower subpacketization and a slightly higher rate than the schemes in \cite{CWLZC,RK2021}.

\begin{figure}
\centering
\includegraphics[width=3in]{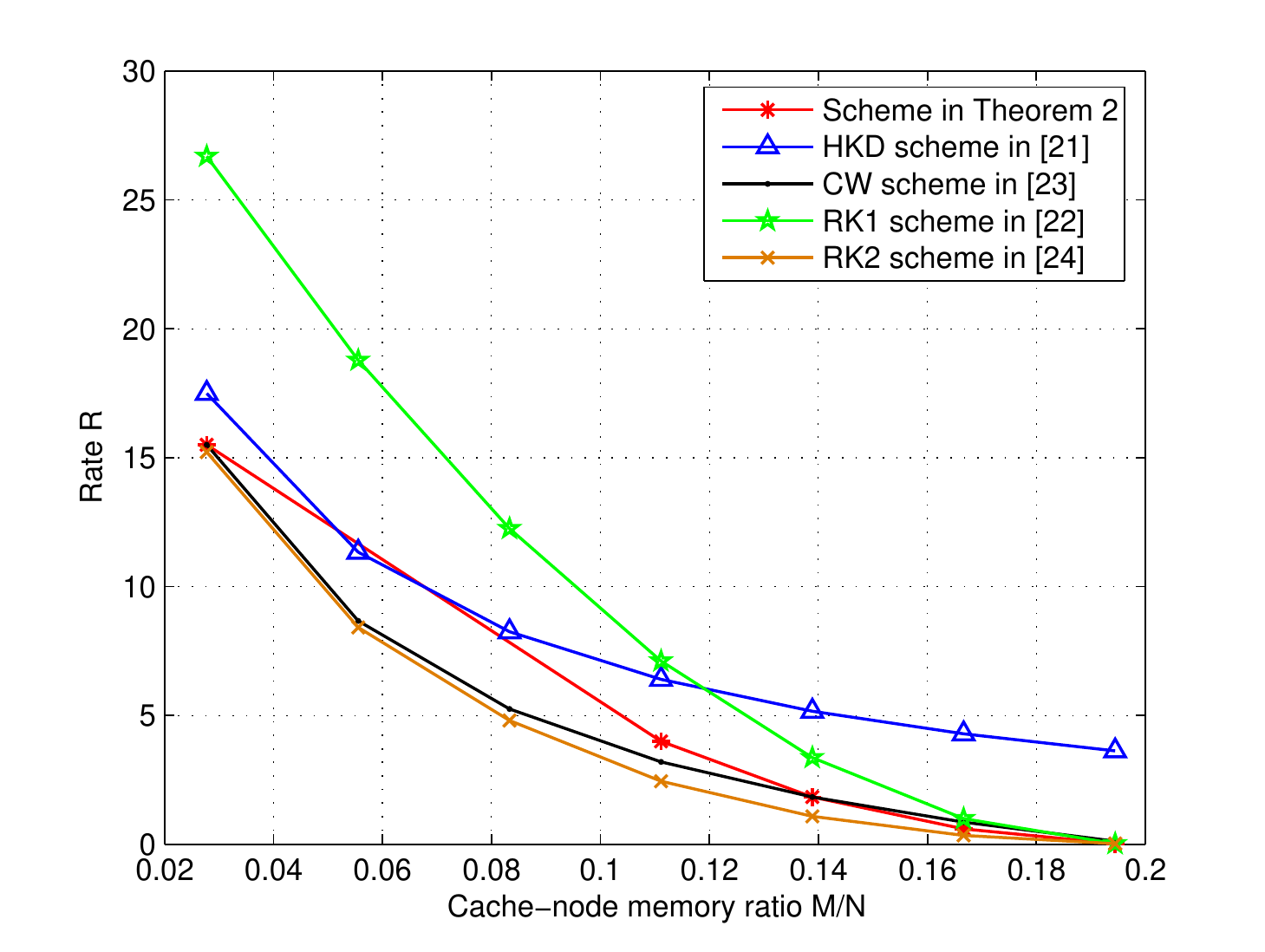}
\hspace{0.1in}
\includegraphics[width=3in]{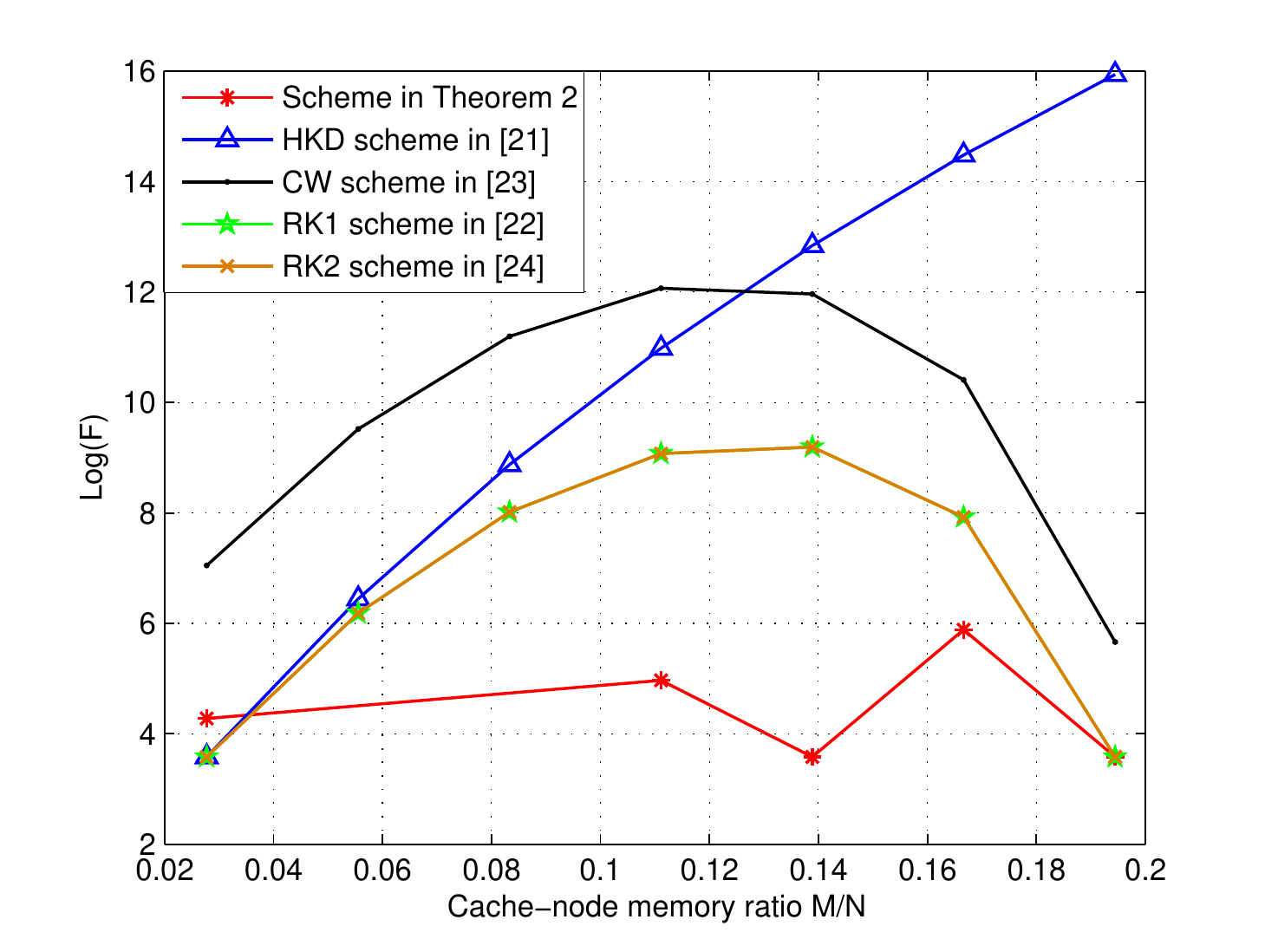}
\caption{The memory-rate and memory-subpacketization tradeoffs when $K=36$ and $L=5$.}\label{compareRF}
\vspace{-0.8cm}
\end{figure}

\item {\it Numerical comparison with the schemes in \cite{MR,SPE,SR,SR_arX2021}:}
Since the subpacketizations of the proposed scheme and the schemes in \cite{MR,SPE,SR,SR_arX2021} are all less than $K^2$, we only compare their coded caching gains. When $K=45$ and $L=7$, the memory ratio versus the coded caching gain of these schemes are shown in Fig \ref{compareg}. It can be seen that our scheme has a larger coded caching gain than the schemes in \cite{MR,SR,SR_arX2021}, and has a lower coded caching gain than the SPE scheme in \cite{SPE}. However, the SPE scheme is only applicable to specific memory ratio, i.e., $\frac{M}{N}=\frac{2}{K}$, while our scheme is applicable to arbitrary memory ratio in $\{\frac{\gamma}{K}|\gamma\in[0:\lfloor\frac{K}{L}\rfloor]\}$.
\begin{figure}
\centering
\includegraphics[width=3in]{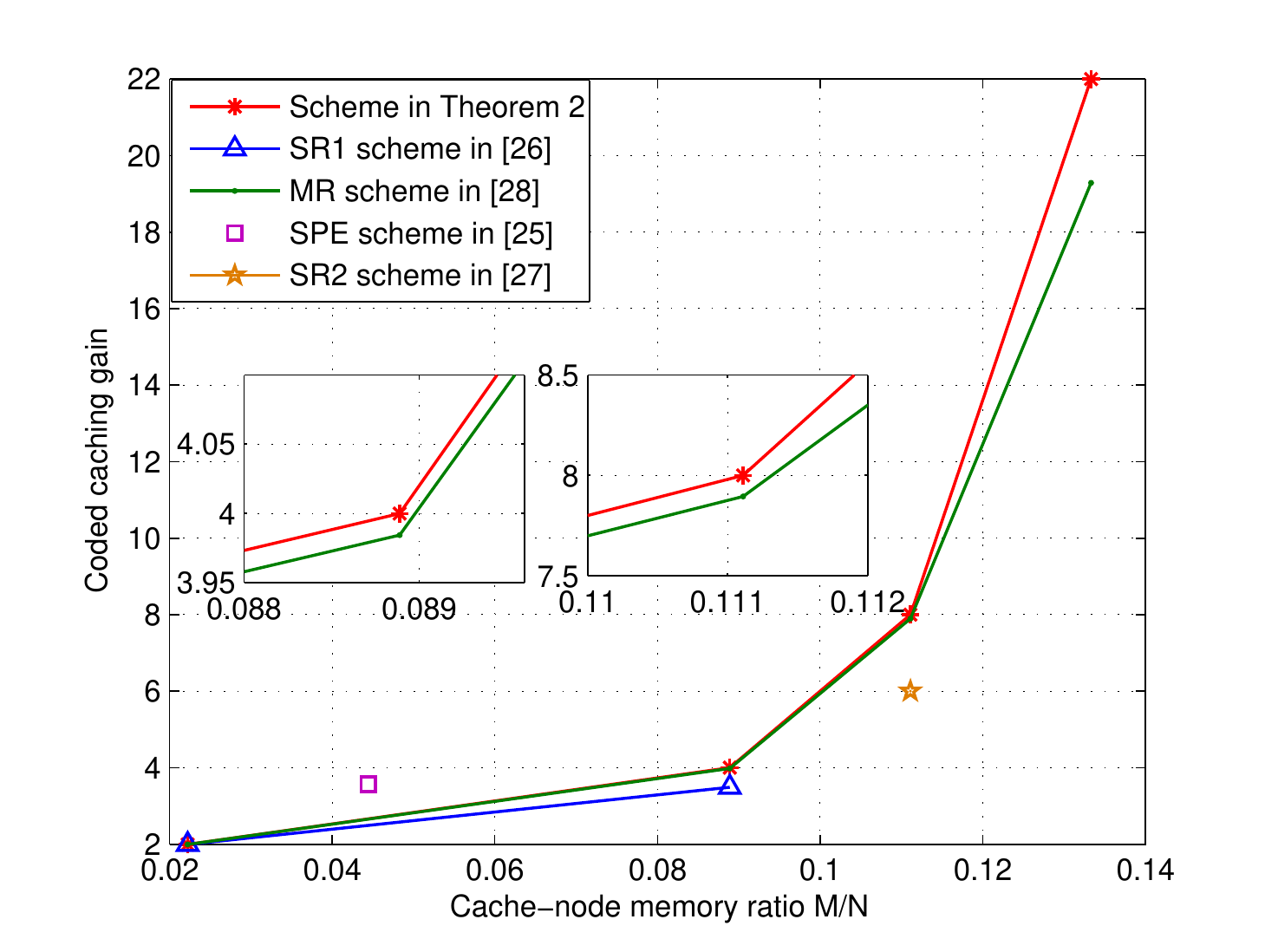}
\vskip 0.2cm
\caption{The cache-node memory ratio versus the coded caching gain when $K=45,L=7$.}\label{compareg}
\vspace{-0.8cm}
\end{figure}
\end{itemize}

\section{The proposed scheme in Theorem \ref{ach_R_F}}
\label{de_scheme}
For a $(K,L,M,N)$ multi-access caching system, for any cache-node memory size $M=\frac{\gamma}{K}N$ where $\gamma\in[0:\lfloor\frac{K}{L}\rfloor]$, let $t=\gamma L$, we will give the construction of PDA (which generates the proposed scheme in Theorem \ref{ach_R_F}) in two cases.
\subsection{The case of $(K-t+1)|K$ or $K-t=1$}
\begin{example}
\label{example2}
When $K=10, L=3, \gamma=2$, we have $t=\gamma L=6$ and $(K-t+1)|K$, the user-retrieve array $\mathbf{U}$ under the consecutive cyclic placement is shown in \eqref{ca_node}. In order to design the delivery strategy, we fill in two-dimensional vectors at the non-star positions in the user-retrieve array $\mathbf{U}$ such that the resulting array is a PDA (i.e., satisfies Condition C3 of Definition \ref{def-PDA}). Precisely, it can be done in three steps, as illustrated in Fig \ref{fill}.
\begin{figure}
  \centering
  \includegraphics[width=6in]{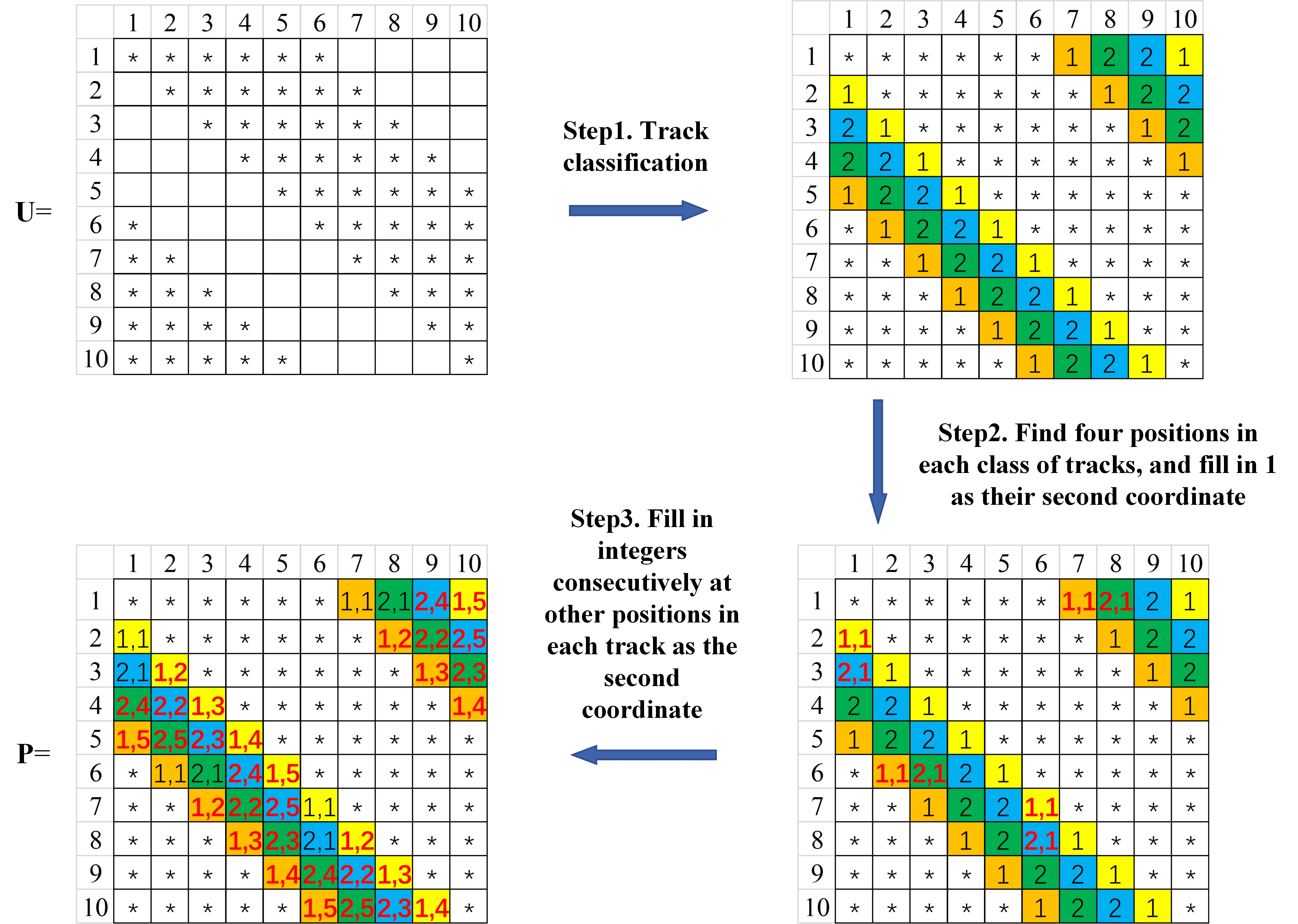}
  \caption{The steps of filling in two-dimensional vectors at the non-star positions in the user-retrieve array $\mathbf{U}$ when $K=10,L=3,\gamma=2$.}\label{fill}
\vspace{-0.8cm}
\end{figure}

{\bf Step1.} We classify the non-star positions in $\mathbf{U}$ into $K-t=4$ tracks. The $l^{\text{th}}$ track is the collection of positions $(j,k)$ satisfying $\langle j-k\rangle_K=l$ where $l\in[1:4]$; the first, second, third and fourth tracks are painted yellow, blue, green and orange respectively in Fig \ref{fill}. Then we classify the four tracks into two classes, i.e., the $l^{\text{th}}$ track belongs to the $\min\{l,K-t+1-l\}=\min\{l,5-l\}^{\text{th}}$ class. The category number of each track is written into each position belonging to that track as the first coordinate of the two-dimensional vector.

{\bf Step2.} For each class of tracks, we first find $\frac{2K}{K-t+1}=4$ positions (in different columns and different rows) in the order of the slowest increase of column labels, such that the element at the intersection of the row of any position and the column of another position is a star. Then we fill in $1$ at the found positions as the second coordinate. For example, for the first class of tracks, we find four positions in turn: $(2,1)\rightarrow (6,2) \rightarrow (7,6) \rightarrow (1,7)$, which form the subarray in \eqref{subarray1}, and we fill in $1$ at the non-star positions in \eqref{subarray1} as the second coordinate, then the vector $(1,1)$ appears four times in the resulting subarray and satisfies Condition C3 of Definition \ref{def-PDA}.
\begin{equation}
\label{subarray1}
\bordermatrix{%
  & 1 & 2 & 6 & 7 \cr
1 & * & * & * & 1 \cr
2 & 1 & * & * & * \cr
6 & * & 1 & * & * \cr
7 & * & * & 1 & *}.
\end{equation}

{\bf Step3.} We fill in integers consecutively at other positions (whose second coordinate has not been filled in yet) in each track as  the second coordinate. Precisely, if the vector at the position $(j,k)$ is $(l,1)$, then the vector at the position $(\langle j+\alpha\rangle_K, \langle k+\alpha\rangle_K)$ in the same track should be $(l,1+\alpha)$, where $\alpha\in[1:K-t]=[1:4]$. Since the vector $(l,1)$ satisfies Condition C3, the consecutive cyclic placement guarantees the vector $(l,1+\alpha)$ also satisfies Condition C3. The constraint $(K-t+1)|K$ guarantees that each vector appears exactly $\frac{2K}{K-t+1}=4$ times in the resulting array $\mathbf{P}$.

It is easy to verify that the resulting array $\mathbf{P}$ is a $4$-$(10,10,6,10)$ PDA under the consecutive cyclic placement, which leads to a multi-access coded caching scheme with the rate $R=1$ and subpacketization $F=10$.
\end{example}

In general, the mathematical representation of the above construction is as follows:
\begin{construction}
\label{constr1}
For a $(K,L,M,N)$ multi-access caching system with $\frac{M}{N}=\frac{\gamma}{K}$, $\gamma\in[0:\lfloor\frac{K}{L}\rfloor]$, let $t=\gamma L$, if $(K-t+1)|K$ or $K-t=1$, a $K\times K$ array $\mathbf{P}$ is defined as follows:
\begin{equation}
\label{eq_con_div}
\mathbf{P}(j,k)=\begin{cases}
(\langle j-k\rangle_K, \langle k\rangle_{K-t+1}), \ \ \text{if} \ \langle j-k\rangle_K< \frac{K-t+1}{2}\\
(K-t+1-\langle j-k\rangle_K, \langle j\rangle_{K-t+1}), \ \ \text{if} \ \frac{K-t+1}{2} <\langle j-k\rangle_K\leq K-t\\
(\langle j-k\rangle_K, \langle k\rangle_{\frac{K-t+1}{2}}), \ \ \text{if} \ \langle j-k\rangle_K= \frac{K-t+1}{2}\\
*,\ \ \text{otherwise.}
\end{cases}
\end{equation}
\end{construction}
The consecutive cyclic placement (i.e., each subfile $W_{n,j}$ can be retrieved by $t$ users $U_j, U_{\langle j+1\rangle_K},$ $\ldots, U_{\langle j+t-1\rangle_K}$) implies that $\mathbf{P}(j,k)=*$ if and only if $\langle j-k\rangle_K>K-t$. For each vector in $\mathbf{P}$, the first coordinate represents the category number of the track that the position of the vector belongs to.
When $K=10, L=3, \gamma=2, t=\gamma L=6$, the array generated by Construction \ref{constr1} is exactly the array $\mathbf{P}$ in Fig \ref{fill}. For example, since $\langle 1-1\rangle_{10}=10>K-t=4$, we have $\mathbf{P}(1,1)=*$ from \eqref{eq_con_div}; since $\langle 2-1\rangle_{10}=1<\frac{K-t+1}{2}=\frac{5}{2}$, we have $\mathbf{P}(2,1)=(\langle 2-1\rangle_{10}, \langle 1\rangle_{5})=(1,1)$ from \eqref{eq_con_div}; since $\langle 4-1\rangle_{10}=3>\frac{K-t+1}{2}=\frac{5}{2}$ and $\langle 4-1\rangle_{10}=3<K-t=4$, we have $\mathbf{P}(4,1)=(10-6+1-\langle 4-1\rangle_{10}, \langle 4\rangle_{5})=(2,4)$ from \eqref{eq_con_div}.
\begin{theorem}
\label{th_div}
For a $(K,L,M,N)$ multi-access caching system with $\frac{M}{N}=\frac{\gamma}{K}$, $\gamma\in[0:\lfloor\frac{K}{L}\rfloor]$, let $t=\gamma L$, if $(K-t+1)|K$ or $K-t=1$, the array generated by Construction \ref{constr1} is a $\frac{2K}{K-t+1}$-$\left(K,K,t,\frac{(K-t)(K-t+1)}{2}\right)$ PDA under the consecutive cyclic placement, which leads to a multi-access coded caching scheme with the rate $R_{new}=\frac{(K-t)(K-t+1)}{2K}$ and subpacketization $F_{new}=K$.
\end{theorem}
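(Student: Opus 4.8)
The plan is to check directly that the array $\mathbf{P}$ produced by Construction~\ref{constr1} meets conditions C1--C3 of Definition~\ref{def-PDA}, that it is $\frac{2K}{K-t+1}$-regular with exactly $\frac{(K-t)(K-t+1)}{2}$ distinct symbols, and then to invoke Lemma~\ref{th-Fundamental}. Throughout I would write $p:=K-t+1$ and, using the hypothesis $(K-t+1)\mid K$, set $q:=K/p\in\mathbb{Z}$ so that $K=qp$. The basic observation is that the consecutive cyclic placement makes $\mathbf{P}(j,k)=*$ precisely when $\langle j-k\rangle_K>K-t=p-1$; thus every non-star entry sits on one of the $K-t$ tracks $l=\langle j-k\rangle_K\in[1:K-t]$, and each column has exactly $K-t$ non-stars and $t$ stars. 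This is C1 with $Z=t$, and the degenerate case $K-t=1$ (where $p=2$ and the single symbol $(1,1)$ fills one entry per column) is disposed of by hand.

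Next I would prove regularity, which yields C2 and the count $S$ at the same time. Grouping tracks by the first coordinate $c=\min\{l,p-l\}$, a non-middle class is the pair of tracks $l=c$ and $l=p-c$; for a fixed symbol $(c,b)$ with $b\in[1:p]$, track $c$ contributes one entry for each $k$ with $\langle k\rangle_p=b$ and track $p-c$ one entry for each $j$ with $\langle j\rangle_p=b$, and since $p\mid K$ there are $q$ indices of each kind, giving $2q=\frac{2K}{p}$ occurrences. On the middle track $c_0=p/2$ (only when $p$ is even) the second coordinate uses modulus $p/2$, and because $(p/2)\mid K$ each symbol $(c_0,b)$ likewise occurs $K/(p/2)=2q$ times. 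Hence $\mathbf{P}$ is $\frac{2K}{K-t+1}$-regular, every symbol appears (C2), and $S=\frac{(K-t)K}{2K/p}=\frac{(K-t)(K-t+1)}{2}$; I would also confirm $S$ directly as $\frac{K-t}{2}\,p$ when $K-t$ is even and $\frac{K-t-1}{2}\,p+\frac{p}{2}$ when $K-t$ is odd.

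The substantive step is C3. Given two distinct positions $(j_1,k_1)$ and $(j_2,k_2)$ carrying the same symbol, equality of the first coordinate places them in one class, and I would split into the sub-cases ``both on track $c$'', ``both on track $p-c$'', ``one on each track'', and ``both on the middle track''. In each sub-case equality of the second coordinate yields a congruence among the indices (modulo $p$, or modulo $p/2$ on the middle track); substituting $j_i\equiv k_i+l\pmod K$ to remove one variable and reducing the two cross-differences $j_1-k_2$ and $j_2-k_1$ modulo $K=qp$, I would show each reduced value exceeds $K-t=p-1$ and is therefore a star, which is exactly the $2\times2$ pattern demanded by C3. The mechanism common to every sub-case is that the unique reduced value that could drop to $\le p-1$ is attained only when the two positions coincide, and this is excluded by distinctness together with $0<|k_1-k_2|<K$.

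The hard part will be the middle-track sub-case, which occurs when $p=K-t+1$ is even: there the shared second coordinate fixes the indices only modulo $p/2$, so a priori $j_1-k_2$ and $j_2-k_1$ are merely multiples of $p/2$, and the multiple $p/2$ itself is $\le K-t$ and would break C3. Writing $k_1-k_2=m\,(p/2)$ and reducing modulo $K=2q\,(p/2)$, this forbidden value arises only if $m\equiv0\pmod{2q}$, whereas distinctness forces $0<|m|<2q$, so it cannot occur and both cross-differences land at $\ge p$. Once C1, C2, C3 and $\frac{2K}{K-t+1}$-regularity are all in hand, $\mathbf{P}$ is a $\frac{2K}{K-t+1}$-$\bigl(K,K,t,\frac{(K-t)(K-t+1)}{2}\bigr)$ PDA, and Lemma~\ref{th-Fundamental} gives the rate $R_{new}=S/F=\frac{(K-t)(K-t+1)}{2K}$ and subpacketization $F_{new}=K$, completing the proof.
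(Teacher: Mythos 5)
Your proposal is correct and takes essentially the same route as the paper's proof: direct verification of C1, of regularity/C2, and of C3 via the same case decomposition (both entries below, both above, one on each side of, or both equal to $\frac{K-t+1}{2}$, i.e.\ your track classes), followed by an appeal to Lemma~\ref{th-Fundamental}. The only differences are cosmetic streamlinings: you count each symbol's occurrences exactly ($q$ per track in a paired class, $2q$ on the middle track), where the paper proves ``at least $\frac{2K}{K-t+1}$'' and closes with an averaging argument, and your C3 sub-cases run on congruences modulo $K-t+1$ (or $\frac{K-t+1}{2}$) rather than the paper's explicit inequality chains, correctly isolating the middle track as the one place where a forbidden residue must be excluded by the distinctness of the two positions.
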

For the detailed proof of Theorem \ref{th_div}, please refer to Appendix \ref{pr_th_div}.

\subsection{The case of $(K-t+1)\nmid K$ and $K-t>1$}
\begin{example}
\label{example3}
When $K=5, L=2, \gamma=1$, then $t=\gamma L=2$, we have $(K-t+1)\nmid K$ and $K-t>1$. We further divide each subfile into $g=2\lfloor\frac{K}{K-t+1}\rfloor=2$ packets, i.e., $W_{n,j}=\{W_{n,(1,j)},W_{n,(2,j)}\}$, then the user-retrieve array $\mathbf{U}$ under the consecutive cyclical placement (i.e., each subfile $W_{n,j}=\{W_{n,(1,j)},W_{n,(2,j)}\}$ can be retrieved by $t=2$ users: $U_j, U_{\langle j+1\rangle_K}$) is shown in Fig \ref{fill3}.
Similar to Example \ref{example2}, we will fill in two-dimensional vectors at the non-star positions in the user-retrieve array $\mathbf{U}$, such that the resulting array is a PDA, as illustrated in Fig \ref{fill3}.
\begin{figure}
  \centering
  \includegraphics[width=5in]{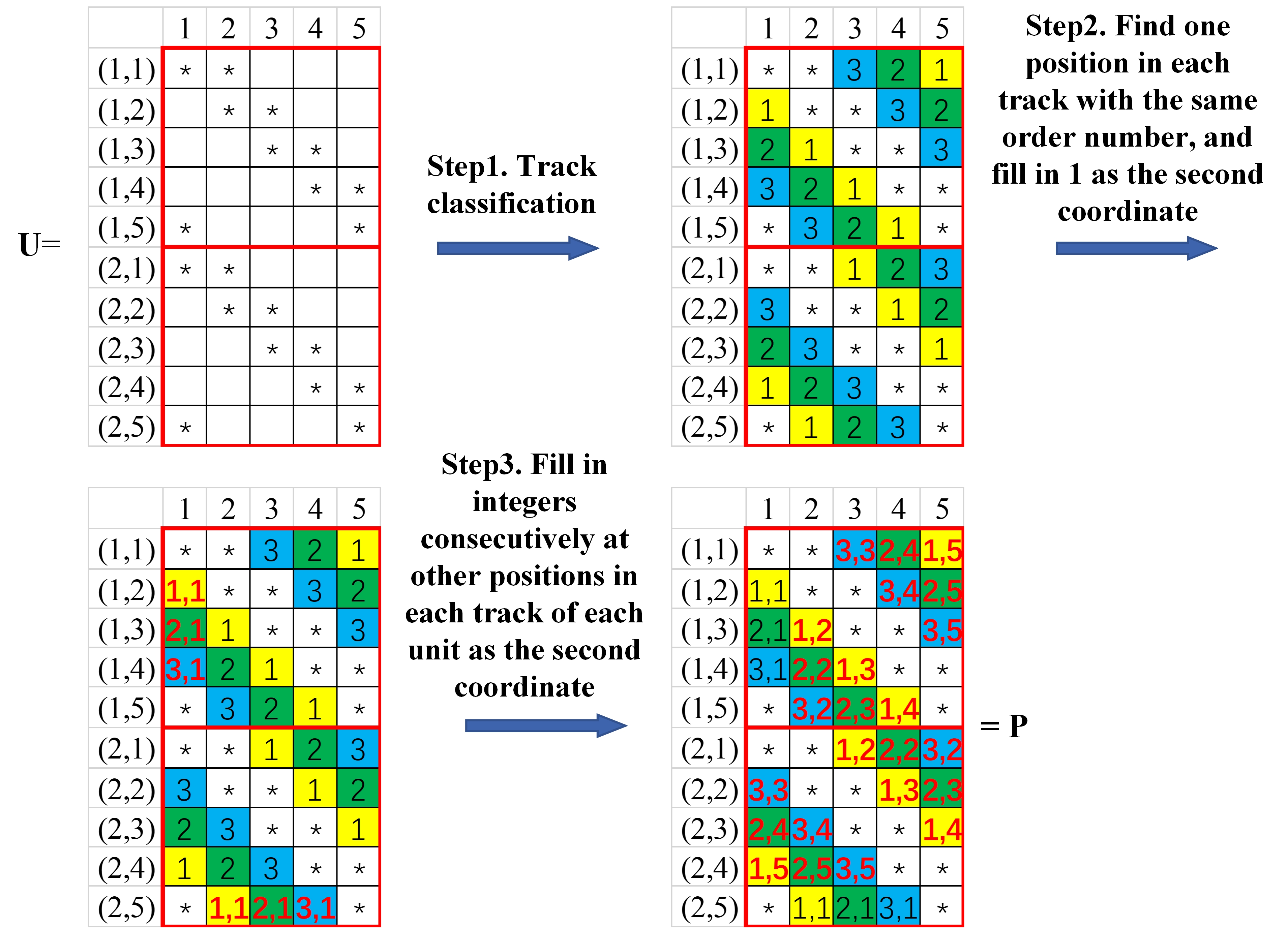}\\
  \caption{The steps of filling in two-dimensional vectors at the non-star positions in the user-retrieve array $\mathbf{U}$ when $K=5, L=2, \gamma=1$.}\label{fill3}
\vspace{-0.8cm}
\end{figure}

{\bf Step1.} We divide the user-retrieve array $\mathbf{U}$ into $g=2$ units, the first unit includes the rows indexed by $(1,j)$ and the second unit includes the rows indexed by $(2,j)$ where $j\in[1:7]$. Then we classify the non-star positions in each unit into $K-t=3$ tracks. For the first unit, the $l^{\text{th}}$ track is the collection of the positions $((1,j),k)$ satisfying $\langle j-k\rangle_K=l$; for the second unit, the $l^{\text{th}}$ track is the collection of the positions $((2,j),k)$ satisfying $K-t+1-\langle j-k\rangle_K=l$, where $l\in[1:3]$. The order number of each track is written into each position belonging to that track as the first coordinate of the two-dimensional vector.

{\bf Step2.} For each $g=2$ tracks with the same order number, we first find $g=2$ positions (in different tracks) in the order of the slowest increase of column labels, such that the element at the intersection of the row of any position and the column of another position is a star. Then we fill in $1$ at the found positions as the second coordinate. For example, for the first track in each unit (which are painted yellow in Fig \ref{fill3}), we find two positions in turn: $((1,2),1)\rightarrow ((2,5),2)$, which form the subarray in \eqref{subarray2}, and we fill in $1$ at the two non-star positions in \eqref{subarray2} as the second coordinate, then the vector $(1,1)$ appears $g=2$ times and satisfies Condition C3 of Definition \ref{def-PDA}.
\begin{equation}
\label{subarray2}
\bordermatrix{%
      & 1 & 2  \cr
(1,2) & 1 & *  \cr
(2,5) & * & 1}.
\end{equation}

{\bf Step3.} We fill in integers consecutively at other positions (whose second coordinate has not been filled in yet) in each track of each unit as the second coordinate. Precisely, if the vector at the position $((i,j),k)$ is $(l,1)$, then the vector at the position $((i,\langle j+\alpha\rangle_K), \langle k+\alpha\rangle_K)$ in the same track should be $(l,1+\alpha)$, where $\alpha\in[1:K-1]=[1:6]$.

It is easy to verify that the resulting array $\mathbf{P}$ is a $2$-$(5,10,4,15)$ PDA under the consecutive cyclic placement, which leads to a multi-access coded caching scheme with the rate $R=\frac{3}{2}$ and subpacketization $F=10$.
\end{example}

In general, the mathematical representation of the above construction is as follows:
\begin{construction}
\label{constr2}
For a $(K,L,M,N)$ multi-access caching system with $\frac{M}{N}=\frac{\gamma}{K}$, $\gamma\in[0:\lfloor\frac{K}{L}\rfloor]$, let $t=\gamma L$, if $(K-t+1)\nmid K$ and $K-t>1$, let
\begin{equation}
\label{eq_g}
g_{new}=\begin{cases}
2\lfloor\frac{K}{K-t+1}\rfloor+1, \ \ \text{if} \ \ \langle K\rangle_{K-t+1}=K-t\\ 2\lfloor\frac{K}{K-t+1}\rfloor, \ \ \text{otherwise}.
\end{cases}
\end{equation}
Let the row index set be $\mathcal{F}=[1:g_{new}]\times [1:K]$ and the column index set be $\mathcal{K}=[1:K]$, a $g_{new}K\times K$ array $\mathbf{P}=(\mathbf{P}((i,j),k))_{(i,j)\in\mathcal{F},k\in\mathcal{K}}$ is defined as follows:
\begin{small}
\begin{equation}
\label{eq_con}
\mathbf{P}((i,j),k)=\begin{cases}
\left(\langle j-k \rangle_K, \langle k-\frac{i-1}{2}(K-t+1)\rangle_K\right), \ \text{if} \ i \ \text{is odd and} \ \langle j-k\rangle_K\leq K-t\\
\left( K-t+1-\langle j-k \rangle_K, \langle j-\frac{i}{2}(K-t+1)\rangle_K\right), \ \text{if} \ i \ \text{is even and} \ \langle j-k\rangle_K\leq K-t\\
*,\ \ \ \ \ \text{otherwise.}
\end{cases}
\end{equation}
\end{small}
\end{construction}

If $(K-t+1)\nmid K$ and $K-t>1$, we need to further divide each subfile $W_{n,j}$ into $g_{new}$ packets, i.e., $W_{n,j}=\{W_{n,(1,j)},W_{n,(2,j)},\ldots,W_{n,(g_{new},j)}\}$, so the row index set is $[1:g_{new}]\times[1:K]$. The consecutive cyclical placement (where each subfile $W_{n,j}$ can be retrieved by $t$ users $U_j, U_{\langle j+1\rangle_K}, \ldots, U_{\langle j+t-1\rangle_K}$) implies that $\mathbf{P}((i,j),k)=*$ if and only if $\langle j-k\rangle_K>K-t$. For each vector in $\mathbf{P}$, the first coordinate represents the order number of the track that the position of the vector belongs to. When $K=5, L=2, \gamma=1, t=\gamma L=2$, the array generated by Construction \ref{constr2} is exactly the array $\mathbf{P}$ in Fig \ref{fill3}. For example, for $\mathbf{P}((1,1),1)$, since $\langle 1-1\rangle_5=5>K-t=3$, we have $\mathbf{P}((1,1),1)=*$ from \eqref{eq_con}; for $\mathbf{P}((1,2),1)$, since $1$ is odd and $\langle 2-1\rangle_5=1<K-t=3$, we have $\mathbf{P}((1,2),1)=(\langle 2-1\rangle_5, \langle 1-\frac{1-1}{2}(5-2+1)\rangle_5)=(1,1)$ from \eqref{eq_con}; for $\mathbf{P}((2,2),1)$, since $2$ is even and $\langle 2-1\rangle_5=1<K-t=3$, we have $\mathbf{P}((2,2),1)=(5-2+1-\langle 2-1\rangle_5, \langle 2-\frac{2}{2}(5-2+1)\rangle_5)=(3,3)$ from \eqref{eq_con}.

\begin{theorem}
\label{th_notdiv}
For a $(K,L,M,N)$ multi-access caching system with $\frac{M}{N}=\frac{\gamma}{K}$, $\gamma\in[0:\lfloor\frac{K}{L}\rfloor]$, let $t=\gamma L$, if $(K-t+1)\nmid K$ and $K-t>1$, the array generated by Construction \ref{constr2} is a $g_{new}$-$(K,g_{new}K,g_{new}t,K(K-t))$ PDA under the consecutive cyclical placement, which leads to a multi-access coded caching scheme with the rate $R_{new}=\frac{K-t}{g_{new}}$ and subpacketization $F_{new}=g_{new}K$, where $g_{new}$ is defined by \eqref{eq_g}.
\end{theorem}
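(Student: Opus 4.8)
The plan is to verify that the $g_{new}K\times K$ array $\mathbf{P}$ of Construction \ref{constr2} satisfies the three conditions of Definition \ref{def-PDA} together with $g_{new}$-regularity, and then to read off the parameters and invoke Lemma \ref{th-Fundamental}. Throughout I write $q=K-t+1$ and $m_0=\lfloor K/q\rfloor$, so that by \eqref{eq_g} we have $g_{new}=2m_0$ or $g_{new}=2m_0+1$; the hypothesis $(K-t+1)\nmid K$ says $q\nmid K$, which gives $m_0q<K$ and $\langle K\rangle_q\in[1:q-1]$. These two facts are the engine of the whole argument.

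First I would dispose of C1 and of the parameters $Z$, $S$ and regularity, which are bookkeeping. Fix a unit $i$ and a column $k$: by \eqref{eq_con} the entry $\mathbf{P}((i,j),k)$ is a star exactly when $\langle j-k\rangle_K>K-t$, and as $j$ ranges over $[1:K]$ this occurs for exactly $t$ values of $j$; summing over the $g_{new}$ units yields $Z=g_{new}t$ stars per column, so C1 holds. For the integer entries I would note that every non-star vector has first coordinate in $[1:K-t]$ and second coordinate in $[1:K]$, and then show each pair $(a,b)$ in this range is realized exactly once in every odd unit and once in every even unit: in an odd unit the two coordinate equations pin down $k$ and hence $j$, and in an even unit they pin down $j$ and hence $k$. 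Since there are $g_{new}$ units altogether, each vector occurs exactly $g_{new}$ times, giving simultaneously C2, the count $S=K(K-t)$, and $g_{new}$-regularity.

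The heart of the proof, and the step I expect to be the main obstacle, is C3. Suppose $\mathbf{P}((i_1,j_1),k_1)=\mathbf{P}((i_2,j_2),k_2)$ is an integer at two distinct positions; I must show the cross-entries are stars, i.e. $\langle j_1-k_2\rangle_K>K-t$ and $\langle j_2-k_1\rangle_K>K-t$. I would split by the parities of $i_1,i_2$. When the parities agree, solving the coordinate equations shows the two cross-entries take the form $\langle u+\delta q\rangle_K$ and $\langle u-\delta q\rangle_K$, where $u=\langle j_1-k_1\rangle_K\in[1:q-1]$ and $\delta\ne 0$ is the half-difference of the two unit indices (distinctness forces $\delta\ne 0$); a direct estimate using $m_0q<K$ together with the allowed range of $\delta$ places both values in $[q:K]=[K-t+1:K]$. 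When the parities differ, the cross-entries instead collapse to pure multiples $\langle(\mu+1)q\rangle_K$ and $\langle-\mu q\rangle_K$ of $q$, where $\mu$ is the difference of the two shift parameters $\tfrac{i_1-1}{2}$ and $\tfrac{i_2}{2}$.

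For this opposite-parity case the crux is that, among all shifts $c$ with $|c|\le m_0$, the only multiple $\langle cq\rangle_K$ that can land in the forbidden region $[1:K-t]$ is $c=-m_0$, since $\langle -m_0q\rangle_K=\langle K\rangle_q\le K-t$, with equality to $K-t$ occurring exactly when $g_{new}=2m_0+1$. The delicate part is therefore to read off, from the explicit ranges of $p_1=\tfrac{i_1-1}{2}$ and $p_2=\tfrac{i_2}{2}$ dictated by whether $g_{new}=2m_0$ or $g_{new}=2m_0+1$, that both cross-indices $\mu+1$ and $-\mu$ stay in the interval $[-m_0+1:m_0]$ and hence never equal $-m_0$; this is the single point where the parity of $g_{new}$ and the near-miss residue $\langle K\rangle_q$ must be handled with care. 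Once C3 is secured, $\mathbf{P}$ is a $g_{new}$-$(K,g_{new}K,g_{new}t,K(K-t))$ PDA, and Lemma \ref{th-Fundamental} gives rate $R_{new}=S/F=\frac{K(K-t)}{g_{new}K}=\frac{K-t}{g_{new}}$ and subpacketization $F_{new}=g_{new}K$, completing the proof.
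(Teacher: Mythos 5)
Your proposal is correct and follows essentially the same route as the paper's proof: star-counting per column for C1, a per-unit occurrence count for C2 and $g_{new}$-regularity, and a parity-based case analysis of C3 in which the cross-entries reduce to shifted residues $\langle u\pm\delta(K-t+1)\rangle_K$ (equal parities) or pure multiples of $K-t+1$ modulo $K$ (opposite parities), bounded using $\lfloor K/(K-t+1)\rfloor(K-t+1)<K$ and the two cases of \eqref{eq_g}. Your only departures are cosmetic but sound: you get exact regularity directly from uniqueness of the solution of the coordinate equations within each unit (the paper instead shows existence and then averages), and you verify both cross-entries explicitly via the ``unique bad shift $-\lfloor K/(K-t+1)\rfloor$'' formulation, where the paper proves one and appeals to symmetry for the other.
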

For the detailed proof of Theorem \ref{th_notdiv}, please refer to Appendix \ref{pr_th_notdiv}.

\subsection{Proof of Theorem \ref{ach_R_F}}
\label{pr_th_ach_R_F}
For a $(K,L,M,N)$ multi-access caching system with $\frac{M}{N}=\frac{\gamma}{K}$, $\gamma\in[0:\lfloor\frac{K}{L}\rfloor]$, let $t=\gamma L$, if $(K-t+1)|K$ or $K-t=1$, the scheme in Theorem \ref{th_div} achieves the rate $R_{new}=\frac{(K-t)(K-t+1)}{2K}$ with subpacketization $F_{new}=K$ under the consecutive cyclical placement; if $\langle K\rangle_{K-t+1}=K-t$ and $K-t>1$, the scheme in Theorem \ref{th_notdiv} achieves the rate $R_{new}=\frac{K-t}{g_{new}}=\frac{K-t}{2\lfloor\frac{K}{K-t+1}\rfloor+1}$ with subpacketization $F_{new}=g_{new}K=(2\lfloor\frac{K}{K-t+1}\rfloor+1)K$ under the consecutive cyclical placement, since in this case $g_{new}=2\lfloor\frac{K}{K-t+1}\rfloor+1$ from \eqref{eq_g}; if $\langle K\rangle_{K-t+1}<K-t$ and $K-t>1$, the scheme in Theorem \ref{th_notdiv} achieves the rate $R_{new}=\frac{K-t}{g_{new}}=\frac{K-t}{2\lfloor\frac{K}{K-t+1}\rfloor}$ with subpacketization $F_{new}=g_{new}K=2\lfloor\frac{K}{K-t+1}\rfloor K$ under the consecutive cyclical placement, since in this case $g_{new}=2\lfloor\frac{K}{K-t+1}\rfloor$ from \eqref{eq_g}. The proof is complete.

\section{Conclusion}
\label{conclusion}
In this paper, we consider the $(K,L,M,N)$ multi-access caching system. First we propose the consecutive cyclic placement, which achieves the maximal local caching gain. Under the consecutive cyclic placement, we derive the optimal (maximal) coded caching gain from the perspective of PDA, thus obtaining a lower bound on the rate of PDA. Finally, we construct a class of PDA under the consecutive cyclic placement, which generates a multi-access coded caching scheme achieving our derived lower bound for some parameters;  while for other parameters, the achieved coded caching gain is only 1 less than the optimal one. Moreover, the subpacketization of the proposed scheme is less than $K^2$. Compared to some existing schemes with exponential subpacketization, our scheme has a lower subpacketization and simultaneously a lower rate. Compared to some existing schemes with linear subpacketization, our scheme has a better coded caching gain, thus achieving a lower rate.

\appendices

\section{Proof of Theorem \ref{maxgain}}
\label{pr_Theorem_Rlowerbound}
\begin{proof}
In order to prove Theorem \ref{maxgain}, we first prove the following proposition.
\begin{proposition}
\label{relationK_g}
For a $(K,L,M,N)$ multi-access caching system with $\frac{M}{N}=\frac{\gamma}{K}$, $\gamma\in[0:\lfloor\frac{K}{L}\rfloor]$, let $t=\gamma L$, if the maximal coded caching gain of a PDA under the consecutive cyclic placement is $g$, then $(g-2)K\leq g(t-1)$.
\end{proposition}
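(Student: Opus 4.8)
The plan is to translate the claim into a purely combinatorial statement about the user-retrieve array $\mathbf{U}$ and then count. Recall that under the consecutive cyclic placement a position $(j,k)$ of $\mathbf{U}$ is a star exactly when user $U_k$ can retrieve subfile $j$, i.e.\ when $k\in\{j,\langle j+1\rangle_K,\ldots,\langle j+t-1\rangle_K\}$; equivalently a star occurs iff $\langle j-k\rangle_K>K-t$, so each column carries $t$ stars and $K-t$ non-stars. Suppose some integer $s$ attains the maximal gain $g$, occurring at positions $(j_1,k_1),\ldots,(j_g,k_g)$. Condition C3 forces these positions to lie in pairwise distinct rows and distinct columns, and moreover for every $a\ne b$ the entry $\mathbf{U}(j_a,k_b)$ must be a star while $\mathbf{U}(j_a,k_a)$ is a non-star. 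First I would record these two facts precisely in terms of the cyclic differences $\langle j_a-k_b\rangle_K$.

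Next I would cast the argument in geometric form on the cycle $\mathbb{Z}_K$. For each $a$, let $A_a=\{j_a,\langle j_a+1\rangle_K,\ldots,\langle j_a+t-1\rangle_K\}$ be the arc of $t$ consecutive columns able to retrieve subfile $j_a$. The observations above say exactly that $A_a$ contains every column $k_b$ with $b\ne a$ but does \emph{not} contain $k_a$. Dualizing, the complementary arc $A_a^{c}$, of length $K-t$, contains the single column $k_a$ and none of the other $g-1$ columns. Thus for each $a$ there is an arc of length $K-t$ that isolates $k_a$ from the remaining columns.

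Then I would place the $g$ distinct columns $k_1,\ldots,k_g$ on the cycle, list them in cyclic order as $p_1,\ldots,p_g$, and let $d_1,\ldots,d_g$ be the consecutive gaps (so $d_i$ is the forward distance from $p_i$ to $p_{i+1}$ and $\sum_{i=1}^g d_i=K$). For a fixed point $p_m=k_a$, the longest arc containing $p_m$ yet avoiding both cyclic neighbours $p_{m-1}$ and $p_{m+1}$ has length $d_{m-1}+d_m-1$. Since an isolating arc of length $K-t$ exists, we must have $K-t\le d_{m-1}+d_m-1$, i.e.
\begin{equation*}
d_{m-1}+d_m\ge K-t+1\qquad\text{for every }m\in[1:g].
\end{equation*}
Summing over all $g$ points and noting that each gap $d_i$ is counted exactly twice yields $2K\ge g(K-t+1)$, which rearranges to $(g-2)K\le g(t-1)$, as required.

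The reduction via C3 and the gap bookkeeping are routine; the step needing the most care is the isolating-arc computation, namely verifying that the maximal length of an arc separating a point from its two neighbours is exactly $d_{m-1}+d_m-1$, together with the degenerate cases $g=1$ (where the inequality is vacuous) and $g=2$ (where the two neighbours coincide, yet the formula still holds cyclically because $\sum_{i=1}^g d_i=K$). I expect no genuine obstacle beyond keeping the cyclic indices and the modular conventions straight.
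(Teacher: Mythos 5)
Your proof is correct, and it is actually more rigorous than the paper's own argument; both start from the same reduction (Condition C3 turns the $g$ occurrences of $s$ into a diagonal of non-stars $(j_a,k_a)$ with stars off the diagonal, and the consecutive cyclic placement turns star patterns into cyclic arcs), but the counting mechanisms differ. The paper stays with the arcs $A_v=[j_v:j_v+t-1]_K$ themselves: it observes $k_u\in\bigcap_{v\in[1:g]\setminus\{u\}}A_v$ for every $u$, and then argues via an extremal-configuration picture (drawn only for $g=3$, with the general case asserted: if each such intersection is a single point then $K$ is maximal and equals $\frac{g(t-1)}{g-2}$, each element of $[1:K]$ being covered $g-2$ times) --- a coverage-multiplicity double count whose justification for general $g$ is left informal; it also splits off $g\le 2$ as obvious. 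You instead dualize to the complements $A_a^{c}$, arcs of length $K-t$ that isolate $k_a$ from the other columns, extract the local gap inequality $d_{m-1}+d_m\ge K-t+1$ at every point, and sum over the $g$ points with each gap counted exactly twice, getting $2K\ge g(K-t+1)$, which is precisely $(g-2)K\le g(t-1)$. What your route buys: every step is an exact, elementary inequality with no appeal to an extremal case, and $g=2$ is handled uniformly (there $d_{m-1}+d_m-1=K-1$, and the conclusion reduces to $t\ge 1$, which holds because a repeated integer forces a star to exist in its column); what the paper's route buys is the explicit extremal configuration --- positions spaced $K-t+1$ apart --- which is the picture that guides Constructions 1 and 2 later in the paper. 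The two steps you flagged do check out: an arc containing $p_m$ and avoiding both cyclic neighbours must lie in the open interval between them, which has exactly $d_{m-1}+d_m-1$ elements, and the degenerate cases behave as you state.
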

\begin{proof}
If $g\leq 2$, the proposition holds obviously. If $g\geq 3$, since the maximal coded caching gain is $g$, there exists an integer $s$ which appears $g$ times in $\mathbf{P}$. Assume that $\mathbf{P}(j_1,k_1)=\mathbf{P}(j_2,k_2)=\ldots=\mathbf{P}(j_g,k_g)=s$ where $j_1<j_2<\ldots<j_g$, then the subarray formed by rows $j_1,j_2,\ldots,j_g$ and columns $k_1,k_2,\ldots,k_g$ is equivalent to the following array
\begin{equation}
\label{eqsubarray}
\bordermatrix{%
&k_1 & k_2 & \ldots & k_{g}  \cr
j_1    & s & *         & \ldots & *             \cr
j_2    & *         & s & \ldots & *             \cr
\vdots & \vdots    & \vdots    & \ddots & \vdots    \cr
j_{g}& *         & *         & \ldots &s }
\end{equation}
from Condition C3 of Definition \ref{def-PDA}.
Since each subfile $W_{n,j}$ can be retrieved by $t$ users $U_j,U_{\langle j+1\rangle_K},$ $\ldots,U_{\langle j+t-1\rangle_K}$ from Definition \ref{conti_cyc}, we have $k_u\in \underset{v\in[1:g]\setminus \{u\}}{\bigcap}[j_v:j_v+t-1]_K$ for any $u\in[1:g]$.

For example, when $g=3$, we have $|[j_u:j_u+t-1]_K\cap[j_v:j_v+t-1]_K|\geq 1$ for any $u\neq v\in[1:3]$. Moreover, if $|[j_u:j_u+t-1]_K\cap[j_v:j_v+t-1]_K|=1$ for any $u\neq v\in[1:3]$, then $K=3t-3$, as illustrated in Fig. \ref{figmaxK}(a). Otherwise, we have $K<3t-3$, as illustrated in Fig. \ref{figmaxK}(b). Hence, when $g=3$, we have $K\leq 3t-3$. In general, if $|\underset{v\in[1:g]\setminus \{u\}}{\bigcap}[j_v:j_v+t-1]_K|=1$ for any $u\in[1:g]$, the value of $K$ is maximal, i.e., $K=\frac{gt-g}{g-2}$, since each element in $[1:K]$ is counted $g-2$ times to get $gt-g$. So we have $(g-2)K\leq g(t-1)$.
\begin{figure}[http!]
    \centering
    \subfigure[$K=3t-3$]{
        \includegraphics[width=2.8in]{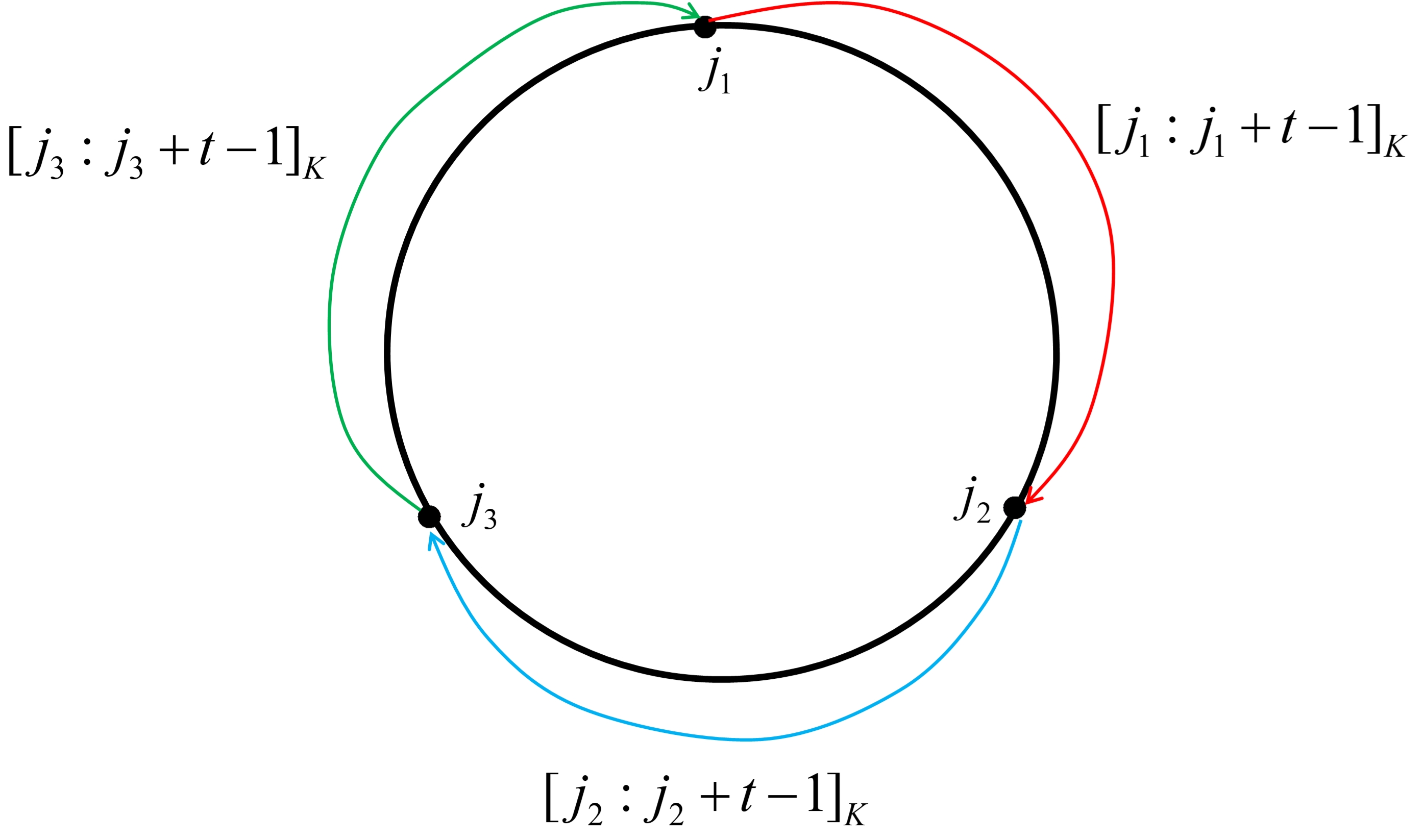}
    }
    ~
    \subfigure[$K<3t-3$]{
        \includegraphics[width=2.6in]{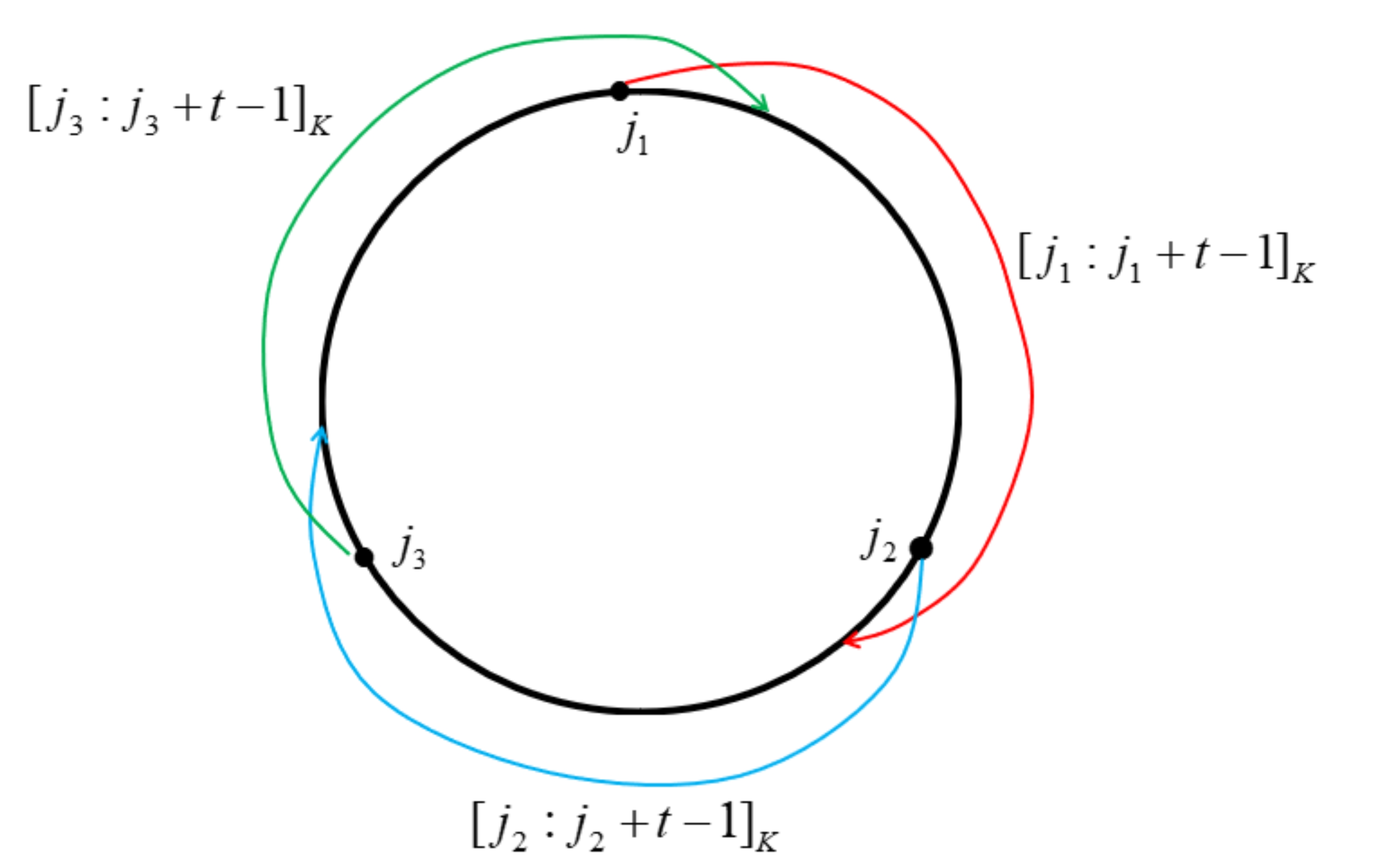}
    }
    \caption{The whole circle represents $[1:K]$, the ranges indicated by the red, blue, and green arrows represent $[j_1:j_1+t-1]_K$, $[j_2:j_2+t-1]_K$, and $[j_3:j_3+t-1]_K$ respectively.}
    \label{figmaxK}
\end{figure}
\end{proof}

Next we will prove Theorem \ref{maxgain}. Assume that the maximal coded caching gain of a PDA under the consecutive cyclic placement is $g$. If $t=0$, we have $g=1$, then the rate $R=K-t$. If $t\geq 1$, let $\lfloor\frac{K}{K-t+1}\rfloor=n$, then $n\geq 1$ and $g\geq 2$. If $\langle K\rangle_{K-t+1}\leq \lfloor\frac{K-t}{2}\rfloor$ or $(K-t+1)|K$, we have $K\leq n(K-t+1)+\frac{K-t}{2}$, which implies that $K\geq \frac{2n(t-1)+t}{2n-1}$. From Proposition \ref{relationK_g}, we have $(g-2)\frac{2n(t-1)+t}{2n-1}\leq (g-2)K\leq g(t-1)$, which implies that $g\leq 2n+\frac{2t-2n}{2t-1}$. Since $2t-2n<2t-1$, we have $g\leq 2n=2\lfloor\frac{K}{K-t+1}\rfloor$, then the rate $R\geq \frac{K-t}{g}\geq \frac{K-t}{2\lfloor\frac{K}{K-t+1}\rfloor}$. If $\lfloor\frac{K-t}{2}\rfloor<\langle K\rangle_{K-t+1}\leq K-t$, we have $K\leq n(K-t+1)+K-t$, which implies that $K\geq \frac{nt-n+t}{n}$. From Proposition \ref{relationK_g}, we have $(g-2)\frac{nt-n+t}{n}\leq (g-2)K\leq g(t-1)$, which implies that $g\leq 2n+1+\frac{t-2n}{t}$. Since $\frac{t-2n}{t}<1$, we have $g\leq 2n+1=2\lfloor\frac{K}{K-t+1}\rfloor+1$, then the rate $R\geq \frac{K-t}{g}\geq \frac{K-t}{2\lfloor\frac{K}{K-t+1}\rfloor+1}$.
\end{proof}

\section{Proof of Theorem \ref{th_div}}
\label{pr_th_div}
\begin{proof}
For a $(K,L,M,N)$ multi-access caching system with $\frac{M}{N}=\frac{\gamma}{K}$, $\gamma\in[0:\lfloor\frac{K}{L}\rfloor]$ and $t=\gamma L$, if $(K-t+1)|K$ or $K-t=1$, we will prove that the array $\mathbf{P}$ generated by Construction \ref{constr1} is a $\frac{2K}{K-t+1}$-$\left(K,K,t,\frac{(K-t)(K-t+1)}{2}\right)$ PDA under the consecutive cyclic placement.

1) For any $j,k\in[1:K]$, from \eqref{eq_con_div} we have
\begin{equation}
\label{eq_star_div}
\mathbf{P}(j,k)=* \Leftrightarrow K-t<\langle j-k\rangle_K\leq K,
\end{equation}
which satisfies the consecutive cyclic placement, i.e., each subfile $W_{n,j}$ can be retrieved by $t$ users: $U_j, U_{\langle j+1\rangle_K}, \ldots, U_{\langle j+t-1\rangle_K}$.
From \eqref{eq_star_div}, there are $t$ stars in each column of $\mathbf{P}$. Condition C1 of Definition \ref{def-PDA} holds.

2) We will prove that there are $\frac{(K-t)(K-t+1)}{2}$ different vectors in $\mathbf{P}$ and each vector appears exactly $\frac{2K}{K-t+1}$ times.
If $K-t=1$, we have $\mathbf{P}(j,k)\neq *$ if and only if $\langle j-k\rangle_K=1=\frac{K-t+1}{2}$ from \eqref{eq_star_div}. From \eqref{eq_con_div}, we have $\mathbf{P}(j,k)=(1,1)$ for any position $(j,k)$ satisfying $\langle j-k\rangle_K=1$. That is, there is only one vector in $\mathbf{P}$, which appears $K$ times. If $(K-t+1)|K$, let $n=\frac{K}{K-t+1}$.
    \begin{itemize}
    \item If $K-t$ is even, for any $s_1\in[1:\frac{K-t}{2}]$ and $s_2\in[1:K-t+1]$, from \eqref{eq_con_div} we have
    $\mathbf{P}(\langle s_1+s_2+i(K-t+1)\rangle_K,s_2+i(K-t+1))=(s_1,s_2)$
    for any $i\in[0:n-1]$, and
    $\mathbf{P}(s_2+j(K-t+1),\langle s_1+s_2+(j-1)(K-t+1)\rangle_K)=(s_1,s_2)$
    for any $j\in[0:n-1]$.
    Furthermore, we have $(\langle s_1+s_2+i(K-t+1)\rangle_K,s_2+i(K-t+1))\neq(s_2+j(K-t+1),\langle s_1+s_2+(j-1)(K-t+1)\rangle_K)$ for any $i,j\in[0:n-1]$, then the vector $(s_1, s_2)$ appears at least $2n$ times in $\mathbf{P}$. Otherwise if
    \begin{equation}
    \label{s1s2}
    \begin{array}{l}(\langle s_1+s_2+i(K-t+1)\rangle_K,s_2+i(K-t+1))=\\ \ \ \ \ \ \ \ (s_2+j(K-t+1),\langle s_1+s_2+(j-1)(K-t+1)\rangle_K),
    \end{array}
    \end{equation}
    then if $i=j$, we have $\langle s_1+s_2\rangle_K=s_2$ from \eqref{s1s2}, leading to $s_1=K$, which contradicts the hypothesis of $s_1\in[1:\frac{K-t}{2}]$;
    if $i<j$, from \eqref{s1s2} we have $s_1=\langle (j-i)(K-t+1)\rangle_K=(j-i)(K-t+1)\geq K-t+1>\frac{K-t}{2}$ (since $0<(j-i)(K-t+1)\leq(n-1)(K-t+1)<K$), which contradicts the hypothesis of $s_1\in[1:\frac{K-t}{2}]$; if $i>j$, from \eqref{s1s2} we have $s_1=\langle (i-j+1)(K-t+1)\rangle_K=(i-j+1)(K-t+1)>\frac{K-t}{2}$ (since $0<(i-j+1)(K-t+1)\leq n(K-t+1)=K$), which also contradicts the hypothesis of $s_1\in[1:\frac{K-t}{2}]$.
    On the other hand, since the average occurrence number of each vector in $[1:\frac{K-t}{2}]\times [1:K-t+1]$ is at most $\frac{K(K-t)}{\frac{K-t}{2}(K-t+1)}=2n$, there are $\frac{(K-t)(K-t+1)}{2}$ different vectors in $\mathbf{P}$ and each vector appears exactly $2n=\frac{2K}{K-t+1}$ times.

    \item If $K-t$ is odd, for any $s_1\in[1:\frac{K-t-1}{2}]$ and $s_2\in[1:K-t+1]$, following the similar process in the case that $K-t$ is even, we have that the vector $(s_1, s_2)$ appears at least $2n$ times in $\mathbf{P}$. In addition, for any $s\in[1:\frac{K-t+1}{2}]$, from \eqref{eq_con_div} we have $\mathbf{P}\left(\left\langle s+(i+1)\frac{K-t+1}{2}\right\rangle_K, s+i\frac{K-t+1}{2}\right)$ $=\left(\frac{K-t+1}{2},s\right)$ for any $i\in[0:2n-1]$. Hence, each vector in $([1:\frac{K-t-1}{2}]\times [1:K-t+1])\cup (\{\frac{K-t+1}{2}\}\times [1:\frac{K-t+1}{2}])$ appears at least $2n$ times in $\mathbf{P}$.
    On the other hand, since the average occurrence number of each vector in $([1:\frac{K-t-1}{2}]\times [1:K-t+1])\cup (\{\frac{K-t+1}{2}\}\times [1:\frac{K-t+1}{2}])$ is at most $\frac{K(K-t)}{\frac{K-t-1}{2}(K-t+1)+\frac{K-t+1}{2}}=2n$, there are $\frac{K-t-1}{2}(K-t+1)+\frac{K-t+1}{2}=\frac{(K-t)(K-t+1)}{2}$ different vectors in $\mathbf{P}$ and each vector appears exactly $2n=\frac{2K}{K-t+1}$ times.

    \end{itemize}
    Condition C2 of Definition \ref{def-PDA} holds.

3) For any two distinct entries, say $\mathbf{P}(j_1,k_1)$ and $\mathbf{P}(j_2,k_2)$, if $\mathbf{P}(j_1,k_1)=\mathbf{P}(j_2,k_2)=(s_1, s_2)$, we will prove that $\mathbf{P}(j_2,k_1)=\mathbf{P}(j_1,k_2)=*$.
If $K-t=1$, we have $\langle j_1-k_1\rangle_K=\langle j_2-k_2\rangle_K=1$ from \eqref{eq_star_div}. If $\langle j_2-k_1\rangle_K=1$, we have $j_1=j_2$ and $k_1=k_2$, which contradicts the hypothesis that $\mathbf{P}(j_1,k_1)$ and $\mathbf{P}(j_2,k_2)$ are two distinct entries. Hence, we have $\langle j_2-k_1\rangle_K\neq1$, which implies $\mathbf{P}(j_2,k_1)=*$ from \eqref{eq_star_div}.
If $(K-t+1)|K$, let $n=\frac{K}{K-t+1}$.
    \begin{itemize}
    \item If $\langle j_1-k_1\rangle_K<\frac{K-t+1}{2}$, then $\langle j_2-k_2\rangle_K\neq\frac{K-t+1}{2}$. Otherwise if $\langle j_2-k_2\rangle_K=\frac{K-t+1}{2}$, we have $s_1=\langle j_1-k_1\rangle_K=\langle j_2-k_2\rangle_K=\frac{K-t+1}{2}$ from \eqref{eq_con_div}, which contradicts the hypothesis of $\langle j_1-k_1\rangle_K<\frac{K-t+1}{2}$.
        \begin{itemize}
        \item If $\langle j_2-k_2\rangle_K<\frac{K-t+1}{2}$, from \eqref{eq_con_div} we have
        \begin{align}
        s_1=\langle j_1-k_1\rangle_K=\langle j_2-k_2\rangle_K, \label{s1_div}\\
        s_2=\langle k_1\rangle_{K-t+1}=\langle k_2\rangle_{K-t+1}.\ \label{s2_div}
        \end{align}
        Then we have $k_1\neq k_2$. Otherwise if $k_1=k_2$, we have $j_1=j_2$ from \eqref{s1_div}, which contradicts the hypothesis that $\mathbf{P}(j_1,k_1)$ and $\mathbf{P}(j_2,k_2)$ are two distinct entries. If $k_1<k_2$, from \eqref{s2_div} there exists some $i\in[1:n-1]$, such that
        \begin{equation}
        \label{k1k2}
        k_2-k_1=i(K-t+1).
        \end{equation}
        Then we have
        \begin{equation}
        \label{j2k1_div}
        \langle j_2-k_1\rangle_K\overset{\eqref{s1_div}}{=}\langle j_1-k_1+k_2-k_1\rangle_K\overset{\eqref{k1k2}}{=}\langle j_1-k_1+i(K-t+1)\rangle_K.
        \end{equation}
        Since $\langle j_1-k_1\rangle_K<\frac{K-t+1}{2}$ implies $1\leq j_1-k_1<\frac{K-t+1}{2}$ or $1\leq K+j_1-k_1<\frac{K-t+1}{2}$. If $1\leq j_1-k_1<\frac{K-t+1}{2}$ holds, we have
        $K-t+2\leq j_1-k_1+i(K-t+1)<\frac{K-t+1}{2}+(n-1)(K-t+1)<n(K-t+1)=K$ since $i\in[1:n-1]$ and $K=n(K-t+1)$. Then from \eqref{j2k1_div} we have $\langle j_2-k_1\rangle_K=\langle j_1-k_1+i(K-t+1)\rangle_K>K-t$, which implies $\mathbf{P}(j_2,k_1)=*$ from \eqref{eq_star_div}. If $1\leq K+j_1-k_1<\frac{K-t+1}{2}$ holds, from \eqref{j2k1_div} we have $\langle j_2-k_1\rangle_K=\langle K+j_1-k_1+i(K-t+1)\rangle_K>K-t$ similarly, which also implies $\mathbf{P}(j_2,k_1)=*$ from \eqref{eq_star_div}.

        \item If $\frac{K-t+1}{2}<\langle j_2-k_2\rangle_K\leq K-t$, from \eqref{eq_con_div} we have
        \begin{equation}
        \label{s2_div1}
        s_2=\langle k_1\rangle_{K-t+1}=\langle j_2\rangle_{K-t+1}.
        \end{equation}
        If $j_2\geq k_1$, there exists $i\in[0:n-1]$, such that $j_2-k_1=i(K-t+1)$, which implies $\langle j_2-k_1\rangle_K>K-t$, leading to $\mathbf{P}(j_2,k_1)=*$ from \eqref{eq_star_div}; if $j_2< k_1$, there exists $i\in[1:n-1]$, such that $k_1-j_2=i(K-t+1)$, which implies $\langle j_2-k_1\rangle_K=\langle K-i(K-t+1)\rangle_K>K-t$ (since $K-t+1=K-(n-1)(K-t+1)\leq K-i(K-t+1)\leq K$), leading to $\mathbf{P}(j_2,k_1)=*$ from \eqref{eq_star_div}.
        \end{itemize}
    \item If $\frac{K-t+1}{2}<\langle j_1-k_1\rangle_K\leq K-t$, we also have $\langle j_2-k_2\rangle_K\neq\frac{K-t+1}{2}$.
        \begin{itemize}
        \item If $\langle j_2-k_2\rangle_K<\frac{K-t+1}{2}$, from \eqref{eq_con_div} we have
        \begin{align}
        s_1=K-t+1-\langle j_1-k_1\rangle_K=\langle j_2-k_2\rangle_K, \label{s1_div2}\\
        s_2=\langle j_1\rangle_{K-t+1}=\langle k_2\rangle_{K-t+1}. \ \ \ \ \ \ \ \ \label{s2_div2}
        \end{align}
        If $j_1=k_2$, from \eqref{s1_div2} we have $\langle j_2-k_1\rangle_K=K-t+1>K-t$, which implies $\mathbf{P}(j_2,k_1)=*$ from \eqref{eq_star_div}. If $j_1>k_2$, from \eqref{s2_div2}, there exists $i\in[1:n-1]$ such that $j_1-k_2=i(K-t+1)$, then we have
        $$\langle j_2-k_1\rangle_K\overset{\eqref{s1_div2}}{=}\langle K-t+1-(j_1-k_2)\rangle_K=\langle K+K-t+1-i(K-t+1)\rangle_K>K-t,$$ since $2(K-t+1)\leq K+K-t+1-i(K-t+1)\leq K$. Consequently, we have $\mathbf{P}(j_2,k_1)=*$ from \eqref{eq_star_div}. If $j_1<k_2$, from \eqref{s2_div2}, there exists $i\in[1:n-1]$ such that $k_2-j_1=i(K-t+1)$, then we have
        $$\langle j_2-k_1\rangle_K\overset{\eqref{s1_div2}}{=}\langle K-t+1+k_2-j_1\rangle_K=\langle K-t+1+i(K-t+1)\rangle_K>K-t,$$ since $2(K-t+1)\leq K-t+1+i(K-t+1)\leq K-t+1+(n-1)(K-t+1)=K.$ Consequently, we have $\mathbf{P}(j_2,k_1)=*$ from \eqref{eq_star_div}.

        \item If $\frac{K-t+1}{2}<\langle j_2-k_2\rangle_K\leq K-t$, from \eqref{eq_con_div} we have
        \begin{align}
        s_1=K-t+1-\langle j_1-k_1\rangle_K=K-t+1-\langle j_2-k_2\rangle_K, \label{s1_div3}\\
        s_2=\langle j_1\rangle_{K-t+1}=\langle j_2\rangle_{K-t+1}. \ \ \ \ \ \ \ \ \ \ \ \ \ \ \ \ \ \ \ \label{s2_div3}
        \end{align}
        Then we have $j_1\neq j_2$. Otherwise if $j_1=j_2$, we have $k_1=k_2$ from \eqref{s1_div3}, which contradicts the hypothesis that $\mathbf{P}(j_1,k_1)$ and $\mathbf{P}(j_2,k_2)$ are two distinct entries. If $j_1<j_2$, from \eqref{s2_div3}, there exists $i\in[1:n-1]$ such that $j_2-j_1=i(K-t+1)$, then we have
        \begin{equation}
        \label{j2k1_4}
        \langle j_2-k_1\rangle_K\overset{\eqref{s1_div3}}{=}\langle j_2-j_1+j_2-k_2\rangle_K=\langle i(K-t+1)+j_2-k_2\rangle_K.
        \end{equation}
        Since $\frac{K-t+1}{2}<\langle j_2-k_2\rangle_K\leq K-t$ implies $\frac{K-t+1}{2}<j_2-k_2\leq K-t$ or $\frac{K-t+1}{2}<K+j_2-k_2\leq K-t$. If $\frac{K-t+1}{2}<j_2-k_2\leq K-t$ holds, we have $\frac{3}{2}(K-t+1)<i(K-t+1)+j_2-k_2\leq (n-1)(K-t+1)+K-t<n(K-t+1)=K,$ then from \eqref{j2k1_4} we have $\langle j_2-k_1\rangle_K>K-t$, which implies $\mathbf{P}(j_2,k_1)=*$ from \eqref{eq_star_div}. If $\frac{K-t+1}{2}<K+j_2-k_2\leq K-t$ holds, from \eqref{j2k1_4} we have $\langle j_2-k_1\rangle_K=\langle i(K-t+1)+K+j_2-k_2\rangle_K>K-t$ similarly, which implies $\mathbf{P}(j_2,k_1)=*$ from \eqref{eq_star_div}. If $j_1>j_2$, from \eqref{s2_div3}, there exists $i\in[1:n-1]$ such that $j_1-j_2=i(K-t+1)$, then we have
        \begin{equation}
        \label{j2k1_5}
        \langle j_2-k_1\rangle_K\overset{\eqref{s1_div3}}{=}\langle j_2-j_1+j_2-k_2\rangle_K=\langle j_2-k_2-i(K-t+1)\rangle_K.
        \end{equation}
        If $\frac{K-t+1}{2}<j_2-k_2\leq K-t$ holds, we have $$
        \begin{array}{l}\frac{3}{2}(K-t+1)=K+\frac{K-t+1}{2}-(n-1)(K-t+1)\\ \ \ \ \  <K+j_2-k_2 -i(K-t+1)\leq K+K-t-(K-t+1)<K,\end{array}$$ then from \eqref{j2k1_5} we have $\langle j_2-k_1\rangle_K=\langle K+j_2-k_2 -i(K-t+1)\rangle_K>K-t$, which implies $\mathbf{P}(j_2,k_1)=*$ from \eqref{eq_star_div}. If $\frac{K-t+1}{2}<K+j_2-k_2\leq K-t$ holds, from \eqref{j2k1_5} we have $\langle j_2-k_1\rangle_K=\langle K+K+j_2-k_2-i(K-t+1)+K\rangle_K>K-t$ similarly, which implies $\mathbf{P}(j_2,k_1)=*$ from \eqref{eq_star_div}.
        \end{itemize}

    \item If $\langle j_1-k_1\rangle_K=\frac{K-t+1}{2}$, we have $\langle j_2-k_2\rangle_K=\frac{K-t+1}{2}$. Otherwise, if $\langle j_2-k_2\rangle_K<\frac{K-t+1}{2}$, from \eqref{eq_con_div} we have $s_1=\langle j_1-k_1\rangle_K=\langle j_2-k_2\rangle_K<\frac{K-t+1}{2}$, which contradicts the hypothesis of $\langle j_1-k_1\rangle_K=\frac{K-t+1}{2}$; if $\frac{K-t+1}{2}<\langle j_2-k_2\rangle_K\leq K-t$, from  \eqref{eq_con_div} we have $s_1=\langle j_1-k_1\rangle_K=K-t+1-\langle j_2-k_2\rangle_K<\frac{K-t+1}{2}$, which also contradicts the hypothesis of $\langle j_1-k_1\rangle_K=\frac{K-t+1}{2}$. Then from \eqref{eq_con_div} we have
        \begin{align}
        s_1=\langle j_1-k_1\rangle_K=\langle j_2-k_2\rangle_K=\frac{K-t+1}{2}, \label{s1_div4}\\
        s_2=\langle k_1\rangle_{\frac{K-t+1}{2}}=\langle k_2\rangle_{\frac{K-t+1}{2}}.\ \ \ \ \ \ \ \ \ \  \label{s2_div4}
        \end{align}
        Consequently, we have $k_1\neq k_2$. Otherwise if $k_1=k_2$, we have $j_1=j_2$ from \eqref{s1_div4}, which contradicts with the hypothesis that $\mathbf{P}(j_1,k_1)$ and $\mathbf{P}(j_2,k_2)$ are two distinct entries. If $k_1<k_2$, from \eqref{s2_div4}, there exists $i\in[1:2n-1]$ such that $k_2-k_1=i\frac{K-t+1}{2}$, then we have
        \begin{equation}
        \label{j2k1_6}
        \langle j_2-k_1\rangle_K\overset{\eqref{s1_div4}}{=}\langle j_1-k_1+k_2-k_1\rangle_K=\left\langle j_1-k_1+i\frac{K-t+1}{2}\right\rangle_K.
        \end{equation}
        Since $\langle j_1-k_1\rangle_K=\frac{K-t+1}{2}$ implies $j_1-k_1=\frac{K-t+1}{2}$ or $K+j_1-k_1=\frac{K-t+1}{2}$, if $j_1-k_1=\frac{K-t+1}{2}$ holds, we have
        $K-t+1\leq j_1-k_1+i\frac{K-t+1}{2}\leq \frac{K-t+1}{2}+(2n-1)\frac{K-t+1}{2}=K,$
        then from \eqref{j2k1_6} we have $\langle j_2-k_1\rangle_K>K-t$, which leads to $\mathbf{P}(j_2,k_1)=*$ from \eqref{eq_star_div}. If $K+j_1-k_1=\frac{K-t+1}{2}$ holds, we have $\langle j_2-k_1\rangle_K=\langle K+j_1-k_1+i\frac{K-t+1}{2}\rangle_K>K-t$ similarly, which leads to $\mathbf{P}(j_2,k_1)=*$ from \eqref{eq_star_div}.
        If $k_1>k_2$, from \eqref{s2_div4}, there exists $i\in[1:2n-1]$ such that $k_1-k_2=i\frac{K-t+1}{2}$, then we have
        \begin{equation}
        \label{j2k1_7}
        \langle j_2-k_1\rangle_K\overset{\eqref{s1_div4}}{=}\langle j_1-k_1+k_2-k_1\rangle_K=\left\langle j_1-k_1-i\frac{K-t+1}{2}\right\rangle_K.
        \end{equation}
        If $j_1-k_1=\frac{K-t+1}{2}$ holds, we have
        $$\begin{array}{l}K-t+1=K+\frac{K-t+1}{2}-(2n-1)\frac{K-t+1}{2}\leq K+j_1-k_1-i\frac{K-t+1}{2}\\ \ \ \ \ \ \ \ \ \ \ \ \ \ \leq K+\frac{K-t+1}{2}-\frac{K-t+1}{2}=K,\end{array}$$
        then from \eqref{j2k1_7} we have $\langle j_2-k_1\rangle_K=\langle K+j_1-k_1-i\frac{K-t+1}{2}\rangle_K>K-t$, which leads to $\mathbf{P}(j_2,k_1)=*$ from \eqref{eq_star_div}. If $K+j_1-k_1=\frac{K-t+1}{2}$ holds, we have $\langle j_2-k_1\rangle_K=\langle K+K+j_1-k_1-i\frac{K-t+1}{2}\rangle_K>K-t$ similarly, which leads to $\mathbf{P}(j_2,k_1)=*$ from \eqref{eq_star_div}.
    \end{itemize}
    Similarly, we can prove that $\mathbf{P}(j_1,k_2)=*$.
    Condition C3 of Definition \ref{def-PDA} holds.
Therefore, the array $\mathbf{P}$ generated by Construction \ref{constr1} is a $\frac{2K}{K-t+1}$-$\left(K,K,t,\frac{(K-t)(K-t+1)}{2}\right)$ PDA under the consecutive cyclic placement, which leads to a multi-access coded caching scheme with the rate $R=\frac{(K-t)(K-t+1)}{2K}$ and subpacketization $F=K$.
\end{proof}

\section{Proof of Theorem \ref{th_notdiv}}
\label{pr_th_notdiv}
\begin{proof}
For a $(K,L,M,N)$ multi-access caching system with $\frac{M}{N}=\frac{\gamma}{K}$, $\gamma\in[0:\lfloor\frac{K}{L}\rfloor]$ and $t=\gamma L$, if $(K-t+1)\nmid K$ and $K-t>1$, we will prove that the array $\mathbf{P}$ generated by Construction \ref{constr2} is a $g_{new}$-$(K,g_{new}K,g_{new}t,K(K-t))$ PDA under the consecutive cyclical placement.

1) For any $i\in[1:g_{new}]$ and any $j,k\in[1:K]$, from \eqref{eq_con} we have
\begin{equation}
\label{eq_star}
\mathbf{P}((i,j),k)=* \Leftrightarrow \langle j-k\rangle_K> K-t.
\end{equation}
Hence, there are $g_{new}t$ stars in each column of $\mathbf{P}$. Condition C1 of Definition \ref{def-PDA} holds.

2) For any $s_1\in[1:K-t]$ and any $s_2\in[1:K]$, we will prove that the vector $(s_1,s_2)$ appears exactly $g_{new}$ times in $\mathbf{P}$.
    For any $i\in[1:g_{new}]$, if $i$ is odd, let $j=\langle \frac{i-1}{2}(K-t+1)+s_1+s_2\rangle_K$ and $k=\langle \frac{i-1}{2}(K-t+1) +s_2\rangle_K$, then $\langle j-k\rangle_K=s_1\leq K-t$, so $\mathbf{P}((i,j),k)=(s_1,s_2)$ from \eqref{eq_con};
    if $i$ is even, let $j=\langle \frac{i}{2}(K-t+1)+s_2\rangle_K$ and $k=\langle (\frac{i}{2}-1)(K-t+1)+s_1+s_2\rangle_K$, then $\langle j-k\rangle_K=K-t+1-s_1\leq K-t$, so $\mathbf{P}((i,j),k)=(s_1,s_2)$ from \eqref{eq_con}.
    Hence, the vector $(s_1,s_2)$ appears at least $g_{new}$ times in $\mathbf{P}$. On the other hand, since the average occurrence number of each vector in $[1:K-t]\times[1:K]$ is no more than $\frac{K(g_{new}K-g_{new}t)}{K(K-t)}=g_{new}$, there are exactly $S=K(K-t)$ different vectors in $\mathbf{P}$, and each vector appears exactly $g_{new}$ times. Condition C2 of Definition \ref{def-PDA} holds.

3) For any two distinct entries, say $\mathbf{P}((i_1,j_1),k_1)$ and $\mathbf{P}((i_2,j_2),k_2)$, if $\mathbf{P}((i_1,j_1),k_1)=\mathbf{P}((i_2,j_2),k_2)=(s_1,s_2)$, we will prove that $\mathbf{P}((i_2,j_2),k_1)=\mathbf{P}((i_1,j_1),k_2)=*$.
    First we have $i_1\neq i_2$. Otherwise, if $i_1=i_2$ is odd, we have $s_1=\langle j_1-k_1\rangle_K=\langle j_2-k_2\rangle_K$ and $s_2=\langle k_1-\frac{i_1-1}{2}(K-t+1)\rangle_K=\langle k_2-\frac{i_2-1}{2}(K-t+1)\rangle_K$ from \eqref{eq_con}, leading to $k_1=k_2$ and $j_1=j_2$, which  contradicts the hypothesis that $\mathbf{P}((i_1,j_1),k_1)$ and $\mathbf{P}((i_2,j_2),k_2)$ are two distinct entries;
    if $i_1=i_2$ is even, we have $s_1= K-t+1-\langle j_1-k_1\rangle_K= K-t+1-\langle j_2-k_2\rangle_K$ and $s_2=\langle j_1-\frac{i_1}{2}(K-t+1)\rangle_K=\langle j_2-\frac{i_2}{2}(K-t+1)\rangle_K$ from \eqref{eq_con}, leading to $j_1=j_2$ and $k_1=k_2$, which also contradicts the hypothesis that $\mathbf{P}((i_1,j_1),k_1)$ and $\mathbf{P}((i_2,j_2),k_2)$ are two distinct entries.    
    Without loss of generality, we assume that $i_1<i_2$.
    \begin{itemize}
    \item If $i_1$ and $i_2$ are odd, we have
          \begin{align}
          &s_1=\langle j_1-k_1\rangle_K=\langle j_2-k_2\rangle_K\in[1:K-t] \label{s1_2odd} \\
          &s_2=\left\langle k_1-\frac{i_1-1}{2}(K-t+1)\right\rangle_K=\left\langle k_2-\frac{i_2-1}{2}(K-t+1)\right\rangle_K \label{s2_2odd}
          \end{align}
          from \eqref{eq_con}. Then we have
        \begin{equation}
        \label{j2k1_2odd}
        \langle j_2-k_1\rangle_K\overset{\eqref{s1_2odd}}{=}\langle j_1-k_1+k_2-k_1\rangle_K\overset{\eqref{s2_2odd}}{=}\left\langle j_1-k_1+\frac{i_2-i_1}{2}(K-t+1)\right\rangle_K.
        \end{equation}

      \begin{itemize}
        \item If $\langle K\rangle_{K-t+1}=K-t$, we have
        \begin{equation}
        \label{eq_g_odd}
        g_{new}=2\left\lfloor\frac{K}{K-t+1}\right\rfloor+1
        \end{equation}
        from \eqref{eq_g} and
        \begin{equation}
        \label{eq_modK}
        K-\left\lfloor\frac{K}{K-t+1}\right\rfloor(K-t+1)=K-t.
        \end{equation}
        From \eqref{s1_2odd} we have $1\leq j_1-k_1\leq K-t$ or $1\leq K+j_1-k_1\leq K-t$. If $1\leq j_1-k_1\leq K-t$ holds, we have
        \begin{align*}
        K-t+2&\leq j_1-k_1+\frac{i_2-i_1}{2}(K-t+1)\leq K-t+\frac{g_{new}-1}{2}(K-t+1) \\
        &\overset{\eqref{eq_g_odd}}{=}K-t+\left\lfloor\frac{K}{K-t+1}\right\rfloor(K-t+1)\overset{\eqref{eq_modK}}{=}K,
        \end{align*}
        since $i_1,i_2\in[1:g_{new}]$ and $i_1, i_2$ are odd. Therefore, from \eqref{j2k1_2odd} we have
        $\langle j_2-k_1\rangle_K=\langle j_1-k_1+\frac{i_2-i_1}{2}(K-t+1)\rangle_K> K-t$, which implies $\mathbf{P}((i_2,j_2),k_1)=*$ from \eqref{eq_star}. If $1\leq K+j_1-k_1\leq K-t$ holds, from \eqref{j2k1_2odd} we have
        $\langle j_2-k_1\rangle_K=\langle K+j_1-k_1+\frac{i_2-i_1}{2}(K-t+1)\rangle_K> K-t$ similarly, which also implies $\mathbf{P}((i_2,j_2),k_1)=*$ from \eqref{eq_star}.

        \item If $\langle K\rangle_{K-t+1}\neq K-t$, we have
        \begin{equation}
        \label{eq_g_even}
        g_{new}=2\left\lfloor\frac{K}{K-t+1}\right\rfloor.
        \end{equation}
        If $1\leq j_1-k_1\leq K-t$ holds, we have
        \begin{align*}
        K-t+2&\leq j_1-k_1+\frac{i_2-i_1}{2}(K-t+1)\leq K-t+\frac{g_{new}-1-1}{2}(K-t+1) \\
        &\overset{\eqref{eq_g_even}}{=}\left\lfloor\frac{K}{K-t+1}\right\rfloor(K-t+1)-1<K,
        \end{align*}
        since $i_1,i_2\in[1:g_{new}]$ and $i_1, i_2$ are odd. Therefore, from \eqref{j2k1_2odd} we have
        $\langle j_2-k_1\rangle_K=\langle j_1-k_1+\frac{i_2-i_1}{2}(K-t+1)\rangle_K> K-t$, which implies $\mathbf{P}((i_2,j_2),k_1)=*$ from \eqref{eq_star}. If $1\leq K+j_1-k_1\leq K-t$ holds, from \eqref{j2k1_2odd} we have
        $\langle j_2-k_1\rangle_K=\langle K+j_1-k_1+\frac{i_2-i_1}{2}(K-t+1)\rangle_K> K-t$ similarly, which also implies $\mathbf{P}((i_2,j_2),k_1)=*$ from \eqref{eq_star}.
      \end{itemize}

    \item If $i_1$ and $i_2$ are even, then we have
          \begin{align}
          &s_1=K-t+1-\langle j_1-k_1\rangle_K=K-t+1-\langle j_2-k_2\rangle_K\in[1:K-t] \label{s1_2even} \\
          &s_2=\left\langle j_1-\frac{i_1}{2}(K-t+1)\right\rangle_K=\left\langle j_2-\frac{i_2}{2}(K-t+1)\right\rangle_K \label{s2_2even}
          \end{align}
          from \eqref{eq_con}. Then we have
        \begin{equation}
        \label{j2k1_2even}
        \langle j_2-k_1\rangle_K\overset{\eqref{s1_2even}}{=}\langle j_2-k_2+j_2-j_1\rangle_K\overset{\eqref{s2_2even}}{=}\left\langle j_2-k_2+\frac{i_2-i_1}{2}(K-t+1)\right\rangle_K.
        \end{equation}
      \begin{itemize}
      \item If $\langle K\rangle_{K-t+1}= K-t$, we have \eqref{eq_g_odd} and \eqref{eq_modK}. From \eqref{s1_2even} we have $1\leq j_2-k_2\leq K-t$ or $1\leq K+j_2-k_2\leq K-t$. If $1\leq j_2-k_2\leq K-t$ holds, we have
        \begin{align*}
        K-t+2&\leq j_2-k_2+\frac{i_2-i_1}{2}(K-t+1)\leq K-t+\frac{g_{new}-1-2}{2}(K-t+1) \\
        &\overset{\eqref{eq_g_odd}}{=}\left\lfloor\frac{K}{K-t+1}\right\rfloor(K-t+1)-1< K,\\
        \end{align*}
        since $i_1,i_2\in[1:g_{new}]$ and $i_1, i_2$ are even. Therefore, from \eqref{j2k1_2even} we have
        $\langle j_2-k_1\rangle_K=\langle j_2-k_2+\frac{i_2-i_1}{2}(K-t+1)\rangle_K> K-t$,
        which implies that $\mathbf{P}((i_2,j_2),k_1)=*$ from \eqref{eq_star}. If $1\leq K+j_2-k_2\leq K-t$ holds, from \eqref{j2k1_2even} we have
         $\langle j_2-k_1\rangle_K=\langle K+j_2-k_2+\frac{i_2-i_1}{2}(K-t+1)\rangle_K> K-t$ similarly, which also implies that $\mathbf{P}((i_2,j_2),k_1)=*$ from \eqref{eq_star}.

      \item If $\langle K\rangle_{K-t+1}\neq K-t$, we have \eqref{eq_g_even}. If $1\leq j_2-k_2\leq K-t$ holds, we have
        \begin{align*}
        K-t+2&\leq j_2-k_2+\frac{i_2-i_1}{2}(K-t+1)\leq K-t+\frac{g_{new}-2}{2}(K-t+1) \\
        &\overset{\eqref{eq_g_even}}{=}\left\lfloor\frac{K}{K-t+1}\right\rfloor(K-t+1)-1<K,
        \end{align*}
        since $i_1,i_2\in[1:g_{new}]$ and $i_1, i_2$ are even. Therefore, from \eqref{j2k1_2even} we have
        $\langle j_2-k_1\rangle_K=\langle j_2-k_2+\frac{i_2-i_1}{2}(K-t+1)\rangle_K> K-t,$
        which implies that $\mathbf{P}((i_2,j_2),k_1)=*$ from \eqref{eq_star}. If $1\leq K+j_2-k_2\leq K-t$ holds, from \eqref{j2k1_2even} we have
         $\langle j_2-k_1\rangle_K=\langle K+j_2-k_2+\frac{i_2-i_1}{2}(K-t+1)\rangle_K> K-t$ similarly, which also implies that $\mathbf{P}((i_2,j_2),k_1)=*$ from \eqref{eq_star}.
      \end{itemize}

    \item If $i_1$ is odd and $i_2$ is even, we have $s_2=\left\langle k_1-\frac{i_1-1}{2}(K-t+1)\right\rangle_K=\left\langle j_2-\frac{i_2}{2}(K-t+1)\right\rangle_K$
    from \eqref{eq_con}, which implies
    \begin{equation}
    \label{j2k1_odd_even}
    \langle j_2-k_1\rangle_K=\left\langle \frac{i_2-i_1+1}{2}(K-t+1)\right\rangle_K.
    \end{equation}
    \begin{itemize}
    \item If $\langle K\rangle_{K-t+1}= K-t$, we have \eqref{eq_g_odd}, then we have
    \begin{align*}
    K-t+1&\leq\frac{i_2-i_1+1}{2}(K-t+1)\leq \frac{(g_{new}-1)-1+1}{2}(K-t+1) \\ &\overset{\eqref{eq_g_odd}}{=}\left\lfloor\frac{K}{K-t+1}\right\rfloor(K-t+1)\leq K
    \end{align*}
    since $i_1,i_2\in[1:g_{new}]$ and $i_1, i_2$ are odd and even respectively.
    Therefore, from \eqref{j2k1_odd_even} we have $\langle j_2-k_1\rangle_K >K-t$, which implies that $\mathbf{P}((i_2,j_2),k_1)=*$ from \eqref{eq_star}.

    \item If $\langle K\rangle_{K-t+1}\neq K-t$, we have \eqref{eq_g_even}. Then we have
    \begin{align*}
    K-t+1&\leq\frac{i_2-i_1+1}{2}(K-t+1)\leq \frac{g_{new}-1+1}{2}(K-t+1) \\ &\overset{\eqref{eq_g_even}}{=}\left\lfloor\frac{K}{K-t+1}\right\rfloor(K-t+1)\leq K
    \end{align*}
    since $i_1,i_2\in[1:g_{new}]$ and $i_1, i_2$ are odd and even respectively.
    Therefore, from \eqref{j2k1_odd_even} we have $\langle j_2-k_1\rangle_K >K-t$, which implies that $\mathbf{P}((i_2,j_2),k_1)=*$ from \eqref{eq_star}.
    \end{itemize}

    \item If $i_1$ is even and $i_2$ is odd, we have
    \begin{align}
    &s_1=K-t+1-\langle j_1-k_1\rangle_K=\langle j_2-k_2\rangle_K\label{s1_even_odd}\\
    &s_2=\left\langle j_1-\frac{i_1}{2}(K-t+1)\right\rangle_K=\left\langle k_2-\frac{i_2-1}{2}(K-t+1)\right\rangle_K \label{s2_even_odd}
    \end{align}
    from \eqref{eq_con}. From \eqref{s1_even_odd} we have
    \begin{equation}
    \label{j1_even_odd}
    j_1=\langle k_1+K-t+1-(j_2-k_2)\rangle_K.
    \end{equation}
    By Substituting \eqref{j1_even_odd} into \eqref{s2_even_odd}, we can obtain \eqref{j2k1_odd_even}.

    \begin{itemize}
    \item If $\langle K\rangle_{K-t+1}= K-t$, we have \eqref{eq_g_odd}. Then we have
    \begin{align*}
    K-t+1&\leq\frac{i_2-i_1+1}{2}(K-t+1)\leq \frac{g_{new}-2+1}{2}(K-t+1) \\ &\overset{\eqref{eq_g_odd}}{=}\left\lfloor\frac{K}{K-t+1}\right\rfloor(K-t+1)\leq K
    \end{align*}
    since $i_1,i_2\in[1:g_{new}]$ and $i_1, i_2$ are even and odd respectively.
    Therefore, from \eqref{j2k1_odd_even} we have $\langle j_2-k_1\rangle_K >K-t$, which implies that $\mathbf{P}((i_2,j_2),k_1)=*$ from \eqref{eq_star}.

    \item If $\langle K\rangle_{K-t+1}\neq K-t$, we have \eqref{eq_g_even}. Then we have
    \begin{align*}
    K-t+1&\leq\frac{i_2-i_1+1}{2}(K-t+1)\leq \frac{(g_{new}-1)-2+1}{2}(K-t+1) \\ &\overset{\eqref{eq_g_even}}{=}\left\lfloor\frac{K}{K-t+1}\right\rfloor(K-t+1)-(K-t+1)< K
    \end{align*}
    since $i_1,i_2\in[1:g_{new}]$ and $i_1, i_2$ are even and odd respectively.
    Therefore, from \eqref{j2k1_odd_even} we have $\langle j_2-k_1\rangle_K >K-t$, which implies that $\mathbf{P}((i_2,j_2),k_1)=*$ from \eqref{eq_star}.
    \end{itemize}

    \end{itemize}
    Similarly, we can prove that $\mathbf{P}((i_1,j_1),k_2)=*$. Condition C3 of Definition \ref{def-PDA} holds.
The proof is complete.
\end{proof}

\bibliographystyle{IEEEtran}
\bibliography{reference}

\end{document}